\documentclass{elsarticle}
\usepackage{amsmath,amssymb,latexsym,amsmath,enumerate,verbatim,amsfonts,mathrsfs,bm}

\newtheorem{lemma}{Lemma}[section]
\newtheorem{example}{Example}[section]

\numberwithin{equation}{section}
\begin{document}
	
\title{Phase Angle and Effective Second-Harmonic Generation Coefficient \thanks{This  work was supported by
		the National Natural Science Foundation of P.R. China (Grant No.12171064).}}

%\author[1]{Die Xiao}
\author[1]{Yisheng Song\corref{mycorrespondingauthor}}\cortext[mycorrespondingauthor]{Corresponding author.  E-mail address: yisheng.song@cqnu.edu.cn (Yisheng Song).}

\affiliation[1]{organization={School of Mathematical Sciences, Chongqing Normal University},
	city={Chongqing},
	postcode={401331},
%	%            state={P.R.}
	country={P.R. China}\\ { Email:  yisheng.song@cqnu.edu.cn}}

\begin{abstract} In this paper, the calculation formulae of phase angle are given for two classes of largest effective SHG coefficients in uniaxial crystals by means of  the optimization theory.  With the help of such calculation formulae, we present the best phase angles and azimuth angles  of all  uniaxial crystals class,   as well as their effective SHG coefficients. Furthermore, these calculation  is only dependent upon some principal subtensor of second order susceptibility tensor. 
\end{abstract}

\begin{keyword}
Susceptibility tensor;  Effective SHG coefficient; Phase angle; Uniaxial crystal
\end{keyword}
\maketitle
\section{Introduction}
The Second-Harmonic Generation (SHG) of light was observed in 1961 by Franken, et al.\cite{FHPW1961}, which is  the first non-linear optical effect \cite{BFHP1962,BFWW1962,FHPW1961,FW1963}. In nonlinear optics, the electric polarization $\vec{\bm P}$ may be said as, $$\vec{\bm P}=\varepsilon_0{\bm\chi}^{(1)}\vec{\bm E}+\varepsilon_0{\bm\chi}^{(2)}\vec{\bm E}\vec{\bm E}+\varepsilon_0{\bm\chi}^{(3)}\vec{\bm E}\vec{\bm E}\vec{\bm E}+\cdots,$$ 
The above equation in component form reads as follows:
$${\bm P}_i=\varepsilon_0\sum_{j=1}^3t_{ij}{\bm E}_j+\varepsilon_0\sum_{j,k=1}^3t_{ijk}{\bm E}_j{\bm E}_k+\varepsilon_0\sum_{j,k,l=1}^3t_{ijkl}{\bm E}_j{\bm E}_k{\bm E}_l+\cdots,$$ 
where $\varepsilon_0$ is the vacuum  permittivity,  ${\bm P}_i$ is the $i-$th Cartesian coordinate of the polarization $\vec{\bm P}$,   $\vec{\bm E}$ is the optical electric field  within the medium, ${\bm\chi}^{(1)}=(t_{ij})$ is the linear dielectric susceptibility, ${\bm\chi}^{(2)}=(t_{ijk})$ is the second-order nonlinear susceptibility, ${\bm\chi}^{(3)}=(t_{ijkl})$ is the third-order nonlinear susceptibility, and so on.   The second-order nonlinear susceptibility tensor ${\bm\chi}^{(2)}=(t_{ijk})$ may be  generated the second-order nonlinear polarization $\vec{\bm P}^{(2)}$ \cite{B2020,CLH1992,J1970,S2024}, \begin{equation}\label{eq:11}\
	\vec{\bm P}^{(2)}=\varepsilon_0{\bm\chi}^{(2)}\vec{\bm E}\vec{\bm E}=\left(\varepsilon_0\sum_{j,k=1}^3t_{ijk}{\bm E}_j{\bm E}_k\right).
\end{equation} The tensor ${\bm\chi}^{(2)}=(t_{ijk})$ is  a  piezoelectric-type  tensor. That is, it is the  intrinsic permutation symmetry in the  last two indices, $$t_{ijk}=t_{ikj}.$$ 

 The second-order susceptibility tensor ${\bm\chi}^{(2)}$ definited six types of  frequency-doubled lights, $oo\to e$, $oe\to o$, $eo\to o$, $eo\to e$, $oe\to e$, $ee\to o$.  Then four different effective SHG coefficients are involved, ${\bm \chi}_{\text{eff}}^{oo-e}$, ${\bm \chi}_{\text{eff}}^{oe-o}={\bm \chi}_{\text{eff}}^{eo-o}$, ${\bm \chi}_{\text{eff}}^{eo-e}={\bm \chi}_{\text{eff}}^{oe-e}$, ${\bm \chi}_{\text{eff}}^{ee-o}.$  Their exact value depends on the components of second-order susceptibility tensor ${\bm\chi}^{(2)}$, and may play an important  role in  classical crystals and quantum optical applications, such as evaluating performance, optimizing devices, designing new nonlinear optical crystals \cite{B2020,C2012,CJ2000,CLH1992,DGGAM2009,GAS2006,PRK2025,WC1992,ZYZ2024}.  

  For solving largest effective SHG coefficients,  some tensors theory and methods, especially, piezoelectric-type tensors theory,  may be involved. For piezoelectric-type tensor,  Chen-J\'akli-Qi \cite{CJQ2023} first used the concepts of  C-eigenvalues, and gave  its applictions in  piezoelectric materials  and liquid crystal physics. Recently, Xiao-Song \cite{XS2026} presented the relationship between the largest effective SHG coefficients and the largest C-eigenvalue of susceptibility tensor, and  cutted down variables of the largest effective SHG coefficients in uniaxial crystals.  Many numerical techniques were applied to compute the largest  C-eigenvalue of a piezoelectric-type tensor \cite{CCW2019, LY2021,LLJ2025,LM2022,WCW2020,WCW2023,YL2022,ZLS2023,ZL2022}. 
  
In this paper,  we mainly provide the analytical formulae of phase angle definited by the largest effective SHG coefficients in uniaxial crystals.  Furthermore, applying these calculation formulae, the best phase angles and azimuth angles  are given for all  uniaxial crystals classes.  At the same time,  the largest  effective SHG coefficients are exactually established. 

\section{Two Crucial Lemmas}

 We know the largest eigenvalue of a squre matrix is equal to the maximum of a quadratic form given by such a matrix  on the unit sphere \cite{HJ202}.   Then the following lemma is easy to be showed (see Appendix  A for detailed proof ).

\begin{lemma}\label{lem:21} Let $f({\bm x})=a_{11}x_1^2+2a_{12}x_1x_2+a_{22}x_2^2$ be a homogeneous polynomial. Then
	$$\max\{f({\bm x}): x_1^2+x_2^2=1\} = \dfrac12\left(a_{11}+a_{22}+\sqrt{(a_{11}-a_{22})^2+4a_{12}^2}\right)$$
	with  its maximizer  ${\bm x}^*=(x^*_1,x^*_2)^\top,$ $${\bm x}^*=\begin{cases}
	\pm\left(\sqrt{\frac{1+\alpha}2}, \sqrt{\frac{1-\alpha}2}\right)^\top, & a_{12}\ge0,\\
	\pm\left(-\sqrt{\frac{1+\alpha}2},\sqrt{\frac{1-\alpha}2}\right)^\top, &a_{12}\le0,\\
	(y_1,y_2)^\top \mbox{ with }y_1^2+y_2^2=1, &a_{12}=0\mbox{ and }a_{11}=a_{22}, 
	\end{cases}$$  where $\alpha=\frac{a_{11}-a_{22}}{\sqrt{(a_{11}-a_{22})^2+4a_{12}^2}}.$ \end{lemma}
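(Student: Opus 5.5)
The plan is to reduce the constrained maximization to a one-variable trigonometric problem rather than invoking eigenvector theory directly, since the trigonometric form makes both the maximum value and the explicit maximizer transparent. Every point on the unit circle $x_1^2+x_2^2=1$ can be written as ${\bm x}=(\cos\theta,\sin\theta)^\top$. Using the double-angle identities $\cos^2\theta=\frac{1+\cos2\theta}2$, $\sin^2\theta=\frac{1-\cos2\theta}2$ and $2\sin\theta\cos\theta=\sin2\theta$, we rewrite
$$f({\bm x})=\frac{a_{11}+a_{22}}2+\frac{a_{11}-a_{22}}2\cos2\theta+a_{12}\sin2\theta .$$
Set $R=\frac12\sqrt{(a_{11}-a_{22})^2+4a_{12}^2}$. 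By the Cauchy--Schwarz inequality in $\mathbb R^2$, the last two terms are bounded above by $R$. Equality holds exactly when $(\cos2\theta,\sin2\theta)$ is parallel to $\left(\frac{a_{11}-a_{22}}2,a_{12}\right)$ with a nonnegative factor. This gives the stated maximum $\frac12\left(a_{11}+a_{22}+\sqrt{(a_{11}-a_{22})^2+4a_{12}^2}\right)$. Equivalently, this is the largest eigenvalue of $\begin{pmatrix}a_{11}&a_{12}\\ a_{12}&a_{22}\end{pmatrix}$, consistent with the remark before the lemma.

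Next we identify the maximizers. If $R>0$, equality forces $\cos2\theta=\alpha$ and $\sin2\theta=a_{12}/R$. From $\cos2\theta=\alpha$ we get $x_1^2=\cos^2\theta=\frac{1+\alpha}2$ and $x_2^2=\sin^2\theta=\frac{1-\alpha}2$. The condition on $\sin2\theta$ then fixes only the relative sign, via $2x_1x_2=\sin2\theta$: $x_1x_2\ge0$ when $a_{12}\ge0$ and $x_1x_2\le0$ when $a_{12}\le0$. Since $\theta$ and $\theta+\pi$ give the same value of $2\theta$ mod $2\pi$, the overall sign is free, which produces the $\pm$ in the first two cases. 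When $a_{12}=0$ but $a_{11}\neq a_{22}$, we have $\alpha=\pm1$, so one coordinate vanishes and both sign formulas agree. This explains why the cases $a_{12}\ge0$ and $a_{12}\le0$ may overlap without contradiction.

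The remaining case is $R=0$, i.e. $a_{12}=0$ and $a_{11}=a_{22}$. Here $\alpha$ is undefined, but $f\equiv a_{11}$ on the circle, so every unit vector is a maximizer; this is the third case. The main obstacle is purely bookkeeping: tracking the sign of $x_1x_2$ correctly against the sign of $a_{12}$, and isolating the degenerate case where $\alpha$ is not defined. The value computation itself is routine once the double-angle rewriting is made.
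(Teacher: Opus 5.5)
Your argument is correct and complete, but it takes a different route from the paper. The paper (Appendix A) uses the fact that the maximum of the quadratic form on the unit circle is the largest eigenvalue $\lambda_1$ of $\begin{pmatrix}a_{11}&a_{12}\\ a_{12}&a_{22}\end{pmatrix}$. It solves the characteristic quadratic for $\lambda_1$ and takes the eigenvector $(a_{12},\lambda_1-a_{11})^\top$ or $(\lambda_1-a_{22},a_{12})^\top$. It then normalizes with the identity $(\lambda_1-a_{11})(\lambda_1-a_{22})=a_{12}^2$ to get $(\pm\sqrt{\lambda_1-a_{22}},\sqrt{\lambda_1-a_{11}})^\top/\sqrt{2\lambda_1-a_{11}-a_{22}}$. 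The final identification $\frac{\lambda_1-a_{22}}{2\lambda_1-a_{11}-a_{22}}=\frac{1+\alpha}2$ is left as a ``simple calculation''.

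Your double-angle rewriting instead turns $f$ into a constant plus a linear functional of $(\cos2\theta,\sin2\theta)$. The problem thus becomes an instance of the paper's own Lemma~\ref{lem:22}. Then $\alpha$ appears directly as $\cos2\theta^*$, and the half-angle formulas give $\sqrt{(1\pm\alpha)/2}$ without any normalization algebra.

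What your route buys:
\begin{itemize}
\item it is self-contained, with no spectral theory;
\item it shows explicitly that for $R>0$ the listed $\pm$ pair is the \emph{whole} set of maximizers, whereas the paper tacitly relies on the top eigenspace being one-dimensional;
\item it explains why the first two cases may overlap when $a_{12}=0$ but $a_{11}\neq a_{22}$;
\item it matches the application in Section 3, where ${\bm z}=(\cos\theta,\sin\theta)$ is literally the phase-angle vector, so $\cos2\theta^*=\alpha$, with $\sin2\theta^*$ carrying the sign of $a_{12}$, comes out immediately.
\end{itemize}
What the paper's route buys is the standard eigenvalue interpretation of the maximum and maximizer. That viewpoint carries over, at least conceptually, to quadratic forms in more than two variables.
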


In  a Hibert space with inner product $\langle\cdot, \cdot\rangle$ and norm $\|\cdot\|$,  it is well-known that Cauchy-Schwarz inequality implies $$|\langle x,y\rangle|\leq \|x\|\|y\|,$$
and moreover,  such a  equality holds if and only if $y=\alpha x$ for some $\alpha$ \cite{R1991}. So the following lemma is easy to obtain.

\begin{lemma}\label{lem:22}
	Let $\mathbb{H}$ be a Hibert space with inner product $\langle\cdot, \cdot\rangle$ and the norm $\|\cdot\|$ induced by an inner product. Then  for fixed ${\bm x}\in \mathbb{H},$
	$$\max_{\|{\bm y}\|=1}\langle {\bm x},{\bm y}\rangle=\|{\bm x}\| \mbox{ with a maximizer } {\bm y}^*=\frac{{\bm x}}{\|{\bm x}\|}.$$ 
	$$\max_{\|{\bm y}\|=1}|\langle {\bm x},{\bm y}\rangle|=\|{\bm x}\| \mbox{ with a maximizer } {\bm y}^*=\pm\frac{{\bm x}}{\|{\bm x}\|}.$$ In particular, 
	$$\max_{\|{\bm y}\|_2=1}(a_1y_1+a_2y_2+\cdots +a_ny_n)=\sqrt{|a_1|^2+|a_2|^2+\cdots+|a_n|^2}$$  with a maximizer $$ {\bm y}^*=\frac{(a_1,a_2,\cdots,a_n)^H}{\sqrt{|a_1|^2+|a_2|^2+\cdots+|a_n|^2}},$$
	here $H$ is conjugate transpose and $\|{\bm x}\|_2=\sqrt{|x_1|^2+|x_2|^2+\cdots+|x_n|^2}$.
\end{lemma}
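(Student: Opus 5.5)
The plan is to derive all three claims of Lemma \ref{lem:22} from the Cauchy--Schwarz inequality and its equality case quoted just before the statement. Assume throughout that ${\bm x}\neq 0$; if ${\bm x}=0$ every unit ${\bm y}$ gives the value $0=\|{\bm x}\|$. In the first claim the inner product is implicitly real, so I read $\langle{\bm x},{\bm y}\rangle$ as its real part in a complex space, which changes nothing below.

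\textbf{Upper bound.} For every ${\bm y}$ with $\|{\bm y}\|=1$,
$$\mathrm{Re}\,\langle {\bm x},{\bm y}\rangle\le |\langle {\bm x},{\bm y}\rangle|\le \|{\bm x}\|\|{\bm y}\|=\|{\bm x}\|.$$

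\textbf{Attainment.} Put ${\bm y}^*={\bm x}/\|{\bm x}\|$, which has unit norm. By linearity (or sesquilinearity) in the second argument,
$$\langle {\bm x},{\bm y}^*\rangle=\frac{\langle{\bm x},{\bm x}\rangle}{\|{\bm x}\|}=\|{\bm x}\|,$$
a nonnegative real number. This proves both the first maximum and its maximizer. For the absolute-value version, $|\langle {\bm x},\pm{\bm y}^*\rangle|=\|{\bm x}\|$, so $\pm{\bm y}^*$ are maximizers. Conversely, the equality case of Cauchy--Schwarz says any maximizer must be a unimodular multiple of ${\bm x}/\|{\bm x}\|$, which is $\pm{\bm y}^*$ in the real case.

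\textbf{Finite-dimensional case.} Specialize to $\mathbb{C}^n$ (or $\mathbb{R}^n$) with the inner product $\langle {\bm x},{\bm y}\rangle=\sum_i x_i y_i$ conjugated in the appropriate slot. Then the objective $a_1y_1+\cdots+a_ny_n$ equals $\langle \bar{\bm a},{\bm y}\rangle$ in the convention that is linear in ${\bm y}$. The norm of this vector is $\sqrt{\sum_i|a_i|^2}$, and the maximizer is its normalization. Written as a column vector, this is $(a_1,\dots,a_n)^H/\sqrt{\sum_i|a_i|^2}$. To check directly, substitute $y_i=\bar a_i/\|{\bm a}\|_2$; this gives $\sum_i a_i\bar a_i/\|{\bm a}\|_2=\|{\bm a}\|_2$.

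The only point needing care, and the main obstacle, is matching conjugation conventions. The inner product must be linear in the slot where ${\bm y}$ sits, and the "max" only makes sense for real values, so the objective should be read via its real part or the coefficients taken real. In the paper's applications the $a_i$ are real, so $H$ reduces to transpose and no difficulty arises.
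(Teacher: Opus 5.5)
Your proposal is correct and follows essentially the route the paper gives just before the lemma. You bound $|\langle {\bm x},{\bm y}\rangle|\le\|{\bm x}\|$ on the unit sphere by Cauchy--Schwarz, get attainment (equivalently, the equality case) at ${\bm y}^*={\bm x}/\|{\bm x}\|$, and specialize to $\mathbb{C}^n$ with the vector $\bar{\bm a}$. Your handling of ${\bm x}=0$ and of the real-part reading in the complex case fills in details the paper leaves implicit.
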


\section{Phase angle and  Azimuth angle}
Let ${\bm x}$ and ${\bm y}$ be  unit vectors of polarization direction of $o$-light (ordinary light)  and $e$-light (extraordinary light) in uniaxial crystals,  respectively. That is, \begin{equation}\label{eq:31}
	{\bm x}=(x_1,x_2,x_3)^\top=(\sin\varphi,-\cos\varphi,0)^\top,
\end{equation}
\begin{equation}\label{eq:32}
	{\bm y}=(y_1,y_2,y_3)^\top=(-\cos\theta\cos\varphi,-\cos\theta\sin\varphi,\sin\theta)^\top,
\end{equation}
where $\theta$  and $\varphi$ are, respectively, the phase angle and  the azimuth angle of wave vector.
Let ${\bm z}=(z_1,z_2)=(\cos\theta, \sin\theta)$. Then ${\bm y}$  may be rewritten as follows,
\begin{equation}\label{eq:33}
	{\bm y}=(x_2z_1,-x_1z_1,z_2)^\top.
\end{equation}

Assume that ${\bm\chi}^{(2)}=(t_{ijk})\in\mathbb{R}^{3\times3\times3}$ is a nonlinear second order susceptibility tensor.  Then two effective SHG coefficients may be  written as 
\[
{\bm \chi}_{\text{eff}}^{oo-e}{\bm y}{\bm x}{\bm x}={\bm\chi}^{(2)}{\bm y}{\bm x}{\bm x}=\sum_{i,j,k=1}^3t_{ijk}y_ix_jx_k=\sum_{i=1}^3\sum_{j,k=1}^2t_{ijk}y_ix_jx_k,
\]
\[
{\bm \chi}_{\text{eff}}^{ee-o}{\bm x}{\bm y}{\bm y}={\bm\chi}^{(2)}{\bm x}{\bm y}{\bm y}=\sum_{i,j,k=1}^3t_{ijk}x_iy_jy_k=\sum_{i=1}^2\sum_{j,k=1}^3t_{ijk}x_iy_jy_k.
\]
So, the effective SHG coefficient ${\bm \chi}_{\text{eff}}^{oo-e}{\bm y}{\bm x}{\bm x}$ is independent of 15 components of   second order susceptibility tensor ${\bm\chi}^{(2)}=(t_{ijk})$, $$t_{113}=t_{131}, t_{123}=t_{132}, t_{133}, t_{213}=t_{231}, t_{223}= t_{232}, t_{233}, t_{313}=t_{331}, t_{323}=t_{332}, t_{333}.$$ 
${\bm \chi}_{\text{eff}}^{oo-e}{\bm y}{\bm x}{\bm x}$ is only dependent upon a principal subtensor of  ${\bm\chi}^{(2)}$, $${\bm\chi}^{(2)}_{\textbf{sub-1}}=(t_{ijk})\in\mathbb{R}^{3\times2\times2}.$$
The concept of principal subtensor sees Ref. \cite{SQ2015}.
The effective SHG coefficient ${\bm \chi}_{\text{eff}}^{ee-o}{\bm x}{\bm y}{\bm y}$ is independent of 9 components of   second order susceptibility tensor ${\bm\chi}^{(2)}=(t_{ijk})$, $$t_{311}, t_{312}=t_{321},  t_{313}= t_{331}, t_{323}=t_{332}, t_{322}, t_{333}.$$ ${\bm \chi}_{\text{eff}}^{ee-o}{\bm x}{\bm y}{\bm y}$ is only dependent upon a principal subtensor of  ${\bm\chi}^{(2)}$, $${\bm\chi}^{(2)}_{\textbf{sub-2}}=(t_{ijk})\in\mathbb{R}^{2\times3\times3}.$$
Clearly, ${\bm \chi}_{\text{eff}}^{eo-o}{\bm x}{\bm y}{\bm x}={\bm \chi}_{\text{eff}}^{oe-o}{\bm x}{\bm x}{\bm y}$ and ${\bm \chi}_{\text{eff}}^{eo-e}{\bm y}{\bm y}{\bm x}={\bm \chi}_{\text{eff}}^{oe-e}{\bm y}{\bm x}{\bm y}$  respectively depends upon   principal subtensors of  ${\bm\chi}^{(2)}$, $${\bm\chi}^{(2)}_{\textbf{sub-3}}=(t_{ijk})\in\mathbb{R}^{2\times3\times2}\mbox{ and }{\bm\chi}^{(2)}_{\textbf{sub-4}}=(t_{ijk})\in\mathbb{R}^{3\times2\times3}.$$

It follows from Theorem 3.1 of Xiao-Song \cite{XS2026} that \begin{equation}\label{eq:34}{\bm \chi}_{\text{eff}}^{oo-e}{\bm y}{\bm x}{\bm x}=z_1\left(x_2V_1({\bm x})-x_1V_2({\bm x})\right)+z_2V_3({\bm x}),\end{equation}where  $V _i({\bm x})=\sum\limits_{j,k=1}^2 t_{ijk}x_jx_k$ for $i=1,2,3$.
Apply Lemma \ref{lem:22} to Eq. \eqref{eq:34}, we have 
\begin{equation}\label{eq:35}\aligned
	{\bm \chi}^{\text{eff}}_1=&\max{\bm \chi}_{\text{eff}}^{oo-e}{\bm y}{\bm x}{\bm x}\\=&\max\left\{\sqrt{\left(x_2V_1({\bm x})-x_1V_2({\bm x})\right)^2+(V_3({\bm x}))^2}: x_1^2+x_2^2=1\right\}\endaligned
\end{equation}
with its maximizers ${\bm x}^*=(\sin\varphi^*,-\cos\varphi^*,0)$ and \begin{equation}\label{eq:36}{\bm z}^*=(\cos\theta^*,\sin\theta^*)=\frac{(x^*_2V_1({\bm x}^*)-x^*_1V_2({\bm x}^*), V_3({\bm x}^*))}{\sqrt{\left(x^*_2V_1({\bm x}^*)-x^*_1V_2({\bm x}^*)\right)^2+(V_3({\bm x}^*))^2}}.\end{equation}

Solve the equation, ${\bm \chi}_{\text{eff}}^{oo-e}{\bm y}{\bm x}{\bm x}=0$ to yield 
\begin{equation}\label{eq:37} \tan\theta=\begin{cases}
	\dfrac{x_1V_2({\bm x})-x_2V_1({\bm x})}{V_3({\bm x})},V_3({\bm x})\ne0,\\
	\pm\infty, V_3({\bm x})=0,
	\end{cases}\end{equation} and so, Eq. \eqref{eq:37} means ${\bm \chi}_{\text{eff}}^{oo-e}{\bm y}{\bm x}{\bm x}=0.$\\

By Theorem 3.2 of Xiao-Song \cite{XS2026}, we obtain
	\begin{equation}\label{eq:38}
		{\bm \chi}_{\text{eff}}^{ee-o}{\bm x}{\bm y}{\bm y}=z^2_1a_{11}({\bm x})+2z_1z_2a_{12}({\bm x})+z^2_2a_{22}({\bm x}),
	\end{equation}
	where	$a_{11}({\bm x})=(t_{111}-2t_{212})x_1x^2_2+(t_{222}-2t_{112})x^2_1x_2+t_{122}x^3_1+t_{211}x^3_2,$
	$$a_{12}({\bm x})=(t_{113}-t_{223})x_1x_2-t_{123}x^2_1+t_{213}x^2_2, a_{22}({\bm x})=t_{133}x_1+t_{233}x_2.$$ 

Applying Lemma \ref{lem:21} to Eq. \eqref{eq:38}, we have 
	\begin{equation}\label{eq:39}
		\aligned &{\bm \chi}^{\text{eff}}_2=\max{\bm \chi}_{\text{eff}}^{ee-o}{\bm x}{\bm y}{\bm y}\\
		=&\dfrac12\max_{\|{\bm x}\|_2=1}\left\{a_{11}({\bm x})+a_{22}({\bm x})+\sqrt{\left(a_{11}({\bm x})-a_{22}({\bm x})\right)^2+4(a_{12}({\bm x}))^2}\right\}\endaligned
	\end{equation} with its maximizers ${\bm x}^*=(\sin\varphi^*,-\cos\varphi^*,0)$ and \begin{equation}\label{eq:310}{\bm z}^*=(\cos\theta^*,\sin\theta^*)=\begin{cases}
	\pm\left(\sqrt{\frac{1+\alpha}2}, \sqrt{\frac{1-\alpha}2}\right),  a_{12}({\bm x}^*)\ge0,\\
	\pm\left(-\sqrt{\frac{1+\alpha}2},\sqrt{\frac{1-\alpha}2}\right), a_{12}({\bm x}^*)\le0,\\
	(\cos\theta,\sin\theta), a_{12}({\bm x}^*)=0, a_{11}({\bm x}^*)=a_{22}({\bm x}^*) 
	\end{cases}\end{equation}  where $\alpha=\frac{a_{11}({\bm x}^*)-a_{22}({\bm x}^*)}{\sqrt{(a_{11}({\bm x}^*)-a_{22}({\bm x}^*))^2+4(a_{12}({\bm x}^*))^2}}. $
Meanwhile, from Eq. \eqref{eq:38}, it follows that $$\frac{{\bm \chi}_{\text{eff}}^{ee-o}{\bm x}{\bm y}{\bm y}}{z_1^2}=a_{22}({\bm x})\left(\frac{z_2}{z_1}\right)^2 +2 a_{12}({\bm x})\cdot \frac{z_2}{z_1}+a_{11}({\bm x}),$$  and hence, solving an equation, ${\bm \chi}_{\text{eff}}^{ee-o}{\bm x}{\bm y}{\bm y}=0$ to yield
\begin{equation}\label{eq:311}
	 \tan\theta=\begin{cases}
	\frac{-a_{12}({\bm x})\pm\sqrt{(a_{12}({\bm x}))^2-a_{11}({\bm x})a_{22}({\bm x})}}{a_{22}({\bm x})},  a_{22}({\bm x})\ne0,\\
	-\frac{a_{11}({\bm x})}{2a_{12}({\bm x})},  \ \  \ \ \ \ \ \ \ \ \ \ \ \ \ \ \ \ \  a_{22}({\bm x})=0,  a_{12}({\bm x})\ne0,\\
	+\infty \mbox{ or }-\infty,  \ \ \ \ \  \ \ \ \ \ \ \ \ a_{22}({\bm x})=a_{12}({\bm x})=0, a_{11}({\bm x})\ne0,\\
	0\mbox{ or }+\infty\mbox{ or }-\infty,  \ \ \ \ \ \ a_{11}({\bm x})=a_{22}({\bm x})=0,  a_{12}({\bm x})\ne0.
	\end{cases} 
\end{equation} So, Eq. \eqref{eq:311} implies ${\bm \chi}_{\text{eff}}^{ee-o}{\bm x}{\bm y}{\bm y}=0.$\\

With thw help of Eqs. \eqref{eq:34}--\eqref{eq:311}, we may easily calculate  the phase angle and the azimuth angle and effective SHG coefficient. For detail calculated process, see Appendix B.\\

\textbf{1. Point Group $\bar{4}$ (e.g., Co$_4$B$_2$As$_2$O$_{12}$·4H$_2$O, HgGa$_2$S$_4$)}  $$ \theta=0, \pi \Longrightarrow {\bm \chi}_{\text{eff}}^{oo-e}{\bm y}{\bm x}{\bm x}=0$$  or $$\varphi=-\frac12\arctan\frac{d_{31}}{d_{36}}, \pi-\frac12\arctan\frac{d_{31}}{d_{36}} \Longrightarrow {\bm \chi}_{\text{eff}}^{oo-e}{\bm y}{\bm x}{\bm x}=0. . $$ 

$ {\bm \chi}^{\text{eff}}_1=\max{\bm \chi}_{\text{eff}}^{oo-e}{\bm y}{\bm x}{\bm x}=\sqrt{d^2_{36}+d_{31}^2}$ with its corresponding azimuth angle $\varphi^*$ and  phase angle $\theta^*$, $$\varphi^*=\frac12\arcsin \frac{d_{36}}{\sqrt{d^2_{36}+d_{31}^2}}, \theta^*=\dfrac{3\pi}2 \mbox{ or }\varphi^*=\dfrac{\pi}2+\frac12\arcsin \frac{d_{36}}{\sqrt{d^2_{36}+d_{31}^2}}, \theta^*=\dfrac{\pi}2 .$$

 $$ \theta=0, \pi,\frac\pi2, \frac{3\pi}2 \Longrightarrow {\bm \chi}_{\text{eff}}^{ee-o}{\bm x}{\bm y}{\bm y}=0,$$ or $$ \varphi=\frac12\arctan\frac{d_{14}}{d_{15}}, \pi+\frac12\arctan\frac{d_{14}}{d_{15}} \Longrightarrow {\bm \chi}_{\text{eff}}^{ee-o}{\bm x}{\bm y}{\bm y}=0.$$
 
 ${\bm \chi}^{\text{eff}}_2=\max{\bm \chi}_{\text{eff}}^{ee-o}{\bm x}{\bm y}{\bm y}=\sqrt{d^2_{14}+d_{15}^2}$ with its corresponding azimuth angle $\varphi^*$ and  phase angle $\theta^*$, $$\varphi^*=\frac\pi2-\frac12\arcsin \frac{d_{15}}{\sqrt{d^2_{14}+d_{15}^2}}, \theta^*= \dfrac{3\pi}4\mbox{ or }\dfrac{7\pi}4;$$
 $$\varphi^*=\pi-\frac12\arcsin \frac{d_{15}}{\sqrt{d^2_{14}+d_{15}^2}}, \theta^*= \dfrac{\pi}4\mbox{ or }\dfrac{5\pi}4.$$

\textbf{2. Point Group $\bar{4}2$m (e.g., KH$_2$PO$_4$, NH$_4$H$_2$PO$_4$, CH$_4$N$_2$O)}
 $$ \theta=0, \pi \Longrightarrow {\bm \chi}_{\text{eff}}^{oo-e}{\bm y}{\bm x}{\bm x}=0,\mbox{ or } \varphi=0, \frac\pi2, \frac{3\pi}2, \pi  \Longrightarrow {\bm \chi}_{\text{eff}}^{oo-e}{\bm y}{\bm x}{\bm x}=0.$$ 
 
 ${\bm \chi}_1^{\textbf{eff}}=\max{\bm \chi}_{\text{eff}}^{oo-e}{\bm y}{\bm x}{\bm x}=|d_{36}|$ with its corresponding azimuth angle $\varphi^*$ and  phase angle $\theta^*$,  $$\varphi^*=\dfrac{3\pi}4\mbox{ or }\dfrac{7\pi}4;\ \  \theta^*=\begin{cases}
 \frac{\pi}2,   d_{36}>0, \\
 \frac{3\pi}2, d_{36}<0.
 \end{cases}$$
 
 $$ \theta=0, \pi,\frac\pi2, \frac{3\pi}2\Longrightarrow {\bm \chi}_{\text{eff}}^{ee-o}{\bm x}{\bm y}{\bm y}=0\mbox{ or } \varphi=\frac{\pi}4, \frac{3\pi}4, \frac{5\pi}4, \frac{7\pi}4 \Longrightarrow {\bm \chi}_{\text{eff}}^{ee-o}{\bm x}{\bm y}{\bm y}=0. $$  

${\bm \chi}^{\text{eff}}_2=\max{\bm \chi}_{\text{eff}}^{ee-o}{\bm x}{\bm y}{\bm y}= |d_{14}|$ with its corresponding azimuth angle $\varphi^*$ and  phase angle $\theta^*$, $$d_{14}>0 \Longrightarrow \varphi^*=0\mbox{ or }\pi\mbox{ and } \theta^*=\dfrac{\pi}4\mbox{ or }\dfrac{5\pi}4; \varphi^*=\dfrac{\pi}2\mbox{ or }\dfrac{3\pi}2\mbox{ and } \theta^*= \dfrac{3\pi}4\mbox{ or }\dfrac{7\pi}4,$$
$$d_{14}<0 \Longrightarrow \varphi^*=\dfrac{\pi}2\mbox{ or }\dfrac{3\pi}2\mbox{ and } \theta^*=\dfrac{\pi}4\mbox{ or }\dfrac{5\pi}4; \varphi^*=0\mbox{ or }\pi\mbox{ and } \theta^*= \dfrac{3\pi}4\mbox{ or }\dfrac{7\pi}4.$$  

\textbf{3. Point Group 3m (e.g., LiNbO$_3$, LiTaO$_3$) }
$$\tan \theta=-\frac{d_{22}\sin3\varphi}{d_{31}} \Longrightarrow {\bm \chi}_{\text{eff}}^{oo-e}{\bm y}{\bm x}{\bm x}=0.$$ 

${\bm \chi}_1^{\textbf{eff}}=\max{\bm \chi}_{\text{eff}}^{oo-e}{\bm y}{\bm x}{\bm x}=\sqrt{d^2_{22}+d_{31}^2}$ with its corresponding azimuth angle $\varphi^*$ and  phase angle $\theta^*$, $$\varphi^*=\dfrac{\pi}6\mbox{ or }\dfrac{5\pi}6\mbox{ or }\dfrac{3\pi}2\mbox{ or }\dfrac{\pi}2\mbox{ or }\dfrac{7\pi}6\mbox{ or }\dfrac{11\pi}6;$$
$$ \theta^*=\arccos\frac{d_{22}}{\sqrt{d^2_{22}+d_{31}^2}}\mbox{ or }\pi-\arccos\frac{d_{22}}{\sqrt{d^2_{22}+d_{31}^2}}.$$

$$\theta=\frac\pi2, \frac{3\pi}2\Longrightarrow {\bm \chi}_{\text{eff}}^{ee-o}{\bm x}{\bm y}{\bm y}=0\mbox{ or } \varphi=\frac\pi6, \frac{\pi}2, \frac{3\pi}2,  \frac{5\pi}6, \frac{7\pi}6,\frac{11\pi}6  \Longrightarrow {\bm \chi}_{\text{eff}}^{ee-o}{\bm x}{\bm y}{\bm y}=0.$$  

${\bm \chi}^{\text{eff}}_2=\max{\bm \chi}_{\text{eff}}^{ee-o}{\bm x}{\bm y}{\bm y}= |d_{22}|$ with its corresponding azimuth angle $\varphi^*$ and  phase angle $\theta^*$,  $$d_{22}>0 \Longrightarrow \varphi^*=\dfrac{\pi}6\mbox{ or }\dfrac{5\pi}6\mbox{ or }\dfrac{3\pi}2\mbox{ and }\theta^*=0;$$
$$d_{22}<0 \Longrightarrow \varphi^*=\dfrac{\pi}2\mbox{ or }\dfrac{7\pi}6\mbox{ or }\dfrac{11\pi}6\mbox{ and } \theta^*=0.$$

\textbf{4. Point Group $\bar{6}$ (e.g., Ag$_2$HPO$_4$)}
 $$ \theta=\frac\pi2,\frac{3\pi}2 \Longrightarrow {\bm \chi}_{\text{eff}}^{oo-e}{\bm y}{\bm x}{\bm x}=0$$  or $$\varphi=\frac13\arctan\frac{d_{11}}{d_{22}}, \frac{2\pi}3+\frac13\arctan\frac{d_{11}}{d_{22}}, \frac{4\pi}3+ \frac13\arctan\frac{d_{11}}{d_{22}}  \Longrightarrow {\bm \chi}_{\text{eff}}^{oo-e}{\bm y}{\bm x}{\bm x}=0.$$
 
 ${\bm \chi}_1^{\textbf{eff}}=\max{\bm \chi}_{\text{eff}}^{oo-e}{\bm y}{\bm x}{\bm x}=\sqrt{d^2_{22}+d_{11}^2}$ with its corresponding azimuth angle $\varphi^*$ and  phase angle $\theta^*$, $$\varphi^*=\alpha, \alpha+\dfrac{2\pi}3, \alpha+\dfrac{4\pi}3,\ \theta^*=0
 \mbox{ or }\varphi^*=\alpha^*, \alpha^*+\dfrac{2\pi}3, \alpha^*+\dfrac{4\pi}3,\ \theta^*=\pi,$$ where $\alpha=\dfrac{2\pi}3-\dfrac13\arcsin\dfrac{d_{22}}{\sqrt{d_{11}^{2}+d_{22}^{2}}}$ and $\alpha^*=\dfrac{\pi}3-\dfrac13\arcsin\dfrac{d_{22}}{\sqrt{d_{11}^{2}+d_{22}^{2}}}.$\\
 
 $$\theta=\frac\pi2, \frac{3\pi}2\Longrightarrow {\bm \chi}_{\text{eff}}^{ee-o}{\bm x}{\bm y}{\bm y}=0$$  or  $$ \varphi=-\frac13\arctan\frac{d_{22}}{d_{11}}, \frac{2\pi}3 -\frac13\arctan\frac{d_{22}}{d_{11}}, \frac{4\pi}3  -\frac13\arctan\frac{d_{22}}{d_{11}}\Longrightarrow {\bm \chi}_{\text{eff}}^{ee-o}{\bm x}{\bm y}{\bm y}=0.$$

${\bm \chi}^{\text{eff}}_2=\max{\bm \chi}_{\text{eff}}^{ee-o}{\bm x}{\bm y}{\bm y}= \sqrt{d_{11}^{2}+d_{22}^{2}}$ with its corresponding azimuth angle $\varphi^*$ and  phase angle $\theta^*$, 
$$\varphi^*=\beta, \beta+\dfrac{2\pi}3, \beta+\dfrac{4\pi}3,\ \theta^*=0, \beta=\dfrac13\arcsin\frac{d_{11}}{\sqrt{d_{11}^{2}+d_{22}^{2}}}.$$

\textbf{5. Point Groups 6 and 4 (e.g., LiIO$_3$, $\alpha$-LiIO$_3$) }$$ \theta=0,\pi \Longrightarrow {\bm \chi}_{\text{eff}}^{oo-e}{\bm y}{\bm x}{\bm x}=0.$$

${\bm \chi}_1^{\textbf{eff}}=\max{\bm \chi}_{\text{eff}}^{oo-e}{\bm y}{\bm x}{\bm x}=|d_{31}|$ with its corresponding azimuth angle $\varphi^*$ and  phase angle $\theta^*$, $$\varphi^*\in[0,\pi],  \theta^*=\begin{cases}
	\frac{\pi}2, &d_{31}>0,\\
	\frac{3\pi}2, &d_{31}<0.
\end{cases}$$

$$\theta=0,\pi,\frac\pi2, \frac{3\pi}2\Longrightarrow {\bm \chi}_{\text{eff}}^{ee-o}{\bm x}{\bm y}{\bm y}=0.$$ 

${\bm \chi}^{\text{eff}}_2=\max{\bm \chi}_{\text{eff}}^{ee-o}{\bm x}{\bm y}{\bm y}= |d_{14}|$ with its corresponding azimuth angle $\varphi^*$ and  phase angle $\theta^*$,$$d_{14}>0 \Longrightarrow \varphi^*\in[0,\pi] \mbox{ and }\theta^*=\dfrac{\pi}4\mbox{ or }\dfrac{5\pi}4 ;$$
$$d_{14}<0 \Longrightarrow \varphi^*\in[0,\pi] \mbox{ and } \theta^*= \dfrac{3\pi}4\mbox{ or }\dfrac{7\pi}4.$$

\textbf{6. Point Group 3 (e.g., Na$_2$I$_2$O$_3$-6H$_2$O)} $$\tan\theta=\frac{d_{22}\sin3\varphi-d_{11}\cos3\varphi}{d_{31}}\Longrightarrow {\bm \chi}_{\text{eff}}^{oo-e}{\bm y}{\bm x}{\bm x}=0.$$ 

${\bm \chi}_1^{\textbf{eff}}=\max{\bm \chi}_{\text{eff}}^{oo-e}{\bm y}{\bm x}{\bm x}=\sqrt{d_{11}^2+d_{22}^2+d_{31}^2}$ with its corresponding azimuth angle $\varphi^*$ and  phase angle $\theta^*$, $\beta=\frac13\arcsin \frac{d_{22}}{\sqrt{d_{11}^2+d_{22}^2}},$ $$\varphi^*=\frac\pi3-\beta, \pi-\beta,\frac{5\pi}3-\beta, \theta^*=\arcsin \frac{d_{
		31}}{\sqrt{d_{11}^2+d_{22}^2+d_{31}^2}},$$  or  $$\varphi^*=\frac{2\pi}3-\beta, \frac{4\pi}3-\beta,2\pi-\beta, \theta^*=\pi-\arcsin \frac{d_{
		31}}{\sqrt{d_{11}^2+d_{22}^2+d_{31}^2}}.$$

$$\tan\theta=-\frac{d_{11}\sin3\varphi+d_{22}\cos3\varphi}{2d_{14}}\Longrightarrow {\bm \chi}_{\text{eff}}^{ee-o}{\bm x}{\bm y}{\bm y}=0.$$ 

 ${\bm \chi}^{\text{eff}}_2=\max{\bm \chi}_{\text{eff}}^{ee-o}{\bm x}{\bm y}{\bm y}= \frac12\left(\sqrt{d_{11}^2+d_{22}^2}+\sqrt{d_{11}^2+d_{22}^2+4d_{14}^2}\right)$ with its corresponding azimuth angle $\varphi^*$ and  phase angle $\theta^*$,$$\varphi^*=\phi\mbox{ or }\phi+\frac{2\pi}3\mbox{ or } \phi+\frac{4\pi}3, \phi=\frac13\arcsin\frac{d_{11}}{\sqrt{d_{11}^2+d_{22}^2}}$$ and $\alpha=\frac{\sqrt{d_{11}^2+d_{22}^2}}{\sqrt{d_{11}^2+d_{22}^2+4d_{14}^2}},$
$$d_{14}>0 \Longrightarrow \theta^*=\pi-\arccos\sqrt{\frac{1+\alpha}2}\mbox{ or }2\pi-\arccos\sqrt{\frac{1+\alpha}2};$$
$$d_{14}<0 \Longrightarrow  \theta^*= \arccos\sqrt{\frac{1+\alpha}2}\mbox{ or }\pi+\arccos\sqrt{\frac{1+\alpha}2}.$$ 

\textbf{7. Point Groups 4mm and 6mm (e.g., BaTiO$_3$, ZnO, C$_4$H$_4$O$_2$N)} $$ \theta=0,\pi \Longrightarrow {\bm \chi}_{\text{eff}}^{oo-e}{\bm y}{\bm x}{\bm x}=0.$$ 

${\bm \chi}_1^{\textbf{eff}}=\max{\bm \chi}_{\text{eff}}^{oo-e}{\bm y}{\bm x}{\bm x}=|d_{31}|$ with its corresponding azimuth angle $\varphi^*$ and  phase angle $\theta^*$,$$\varphi^*\in[0,2\pi],\mbox{ and } \theta^*=\frac\pi2 \ \ (d_{31}>0),\frac{3\pi}2  \ \  (d_{31}<0).$$  
$${\bm \chi}_{\text{eff}}^{ee-o}{\bm x}{\bm y}{\bm y}=0.$$

\textbf{8. Point Group 32 (e.g., $\alpha$-SiO$_2$, HgO) }$$ \theta=\frac\pi2, \frac{3\pi}2\Longrightarrow {\bm \chi}_{\text{eff}}^{oo-e}{\bm y}{\bm x}{\bm x}=0$$
or $$\varphi= \frac\pi6, \frac{\pi}2, \frac{3\pi}2,  \frac{5\pi}6, \frac{7\pi}6,\frac{11\pi}6  \Longrightarrow {\bm \chi}_{\text{eff}}^{oo-e}{\bm y}{\bm x}{\bm x}=0.$$
${\bm \chi}_1^{\textbf{eff}}=\max{\bm \chi}_{\text{eff}}^{oo-e}{\bm y}{\bm x}{\bm x}=|d_{11}|$ with its corresponding azimuth angle $\varphi^*$ and  phase angle $\theta^*$,$$d_{11}>0 \Longrightarrow \varphi^*=0\mbox{ or }\frac{2\pi}3\mbox{ or }\frac{4\pi}3, \theta^*=0;  \ \ \varphi^*=\frac\pi3\mbox{ or }\pi\mbox{ or }\frac{5\pi}3,\theta^*=\pi;$$  $$d_{11}<0 \Longrightarrow \varphi^*=\pi\mbox{ or }\frac{2\pi}3\mbox{ or }\frac{4\pi}3, \theta^*=\pi; \ \ \varphi^*=\frac\pi3\mbox{ or }\pi\mbox{ or }\frac{5\pi}3, \theta^*=0.$$

 $$\tan\theta=\frac{d_{11}\sin3\varphi}{2d_{14}}\Longrightarrow {\bm \chi}_{\text{eff}}^{ee-o}{\bm x}{\bm y}{\bm y}=0.$$ 

${\bm \chi}^{\text{eff}}_2=\max{\bm \chi}_{\text{eff}}^{ee-o}{\bm x}{\bm y}{\bm y}= \frac12\left(|d_{11}|+\sqrt{d^2_{11}+4d^2_{14}}\right)$ with its corresponding azimuth angle $\varphi^*$ and  phase angle $\theta^*$, $$\varphi^*=\frac{\pi}6 \mbox{ or }\frac{5\pi}6\mbox{ or } \frac{3\pi}2(d_{11}>0);   \varphi^*=\frac{\pi}2 \mbox{ or }\frac{7\pi}6 \mbox{ or } \frac{11\pi}6\ (d_{11}<0),$$
$$d_{14}>0 \Longrightarrow \theta^*=\pi-\arccos\sqrt{\frac{1+\alpha}2}\mbox{ or }2\pi-\arccos\sqrt{\frac{1+\alpha}2};$$
$$d_{14}<0 \Longrightarrow  \theta^*= \arccos\sqrt{\frac{1+\alpha}2}\mbox{ or }\pi+\arccos\sqrt{\frac{1+\alpha}2}.$$

\textbf{9. Point Groups 422 and 622  (e.g., Cl$_3$CCO$_2$K,  SiO$_2$)}\\

${\bm \chi}_{\text{eff}}^{oo-e}{\bm y}{\bm x}{\bm x}=0$, $$\theta=0,\pi,\frac\pi2, \frac{3\pi}2\Longrightarrow {\bm \chi}_{\text{eff}}^{ee-o}{\bm x}{\bm y}{\bm y}=0.$$ 

${\bm \chi}^{\text{eff}}_2=\max{\bm \chi}_{\text{eff}}^{ee-o}{\bm x}{\bm y}{\bm y}= |d_{14}|$ with its corresponding azimuth angle $\varphi^*$ and  phase angle $\theta^*$,   $$\varphi^*\in[0,2\pi], \theta^*=\frac{\pi}4 \mbox{ or }\frac{5\pi}4\  (d_{11}<0);  \frac{3\pi}4 \mbox{ or } \frac{7\pi}4 \ (d_{11}>0).$$

\textbf{10. Point Groups $\bar{6}$m2  (e.g., BaTiSi$_3$O$_9$)}$$\theta=\frac\pi2, \frac{3\pi}2\Longrightarrow {\bm \chi}_{\text{eff}}^{oo-e}{\bm y}{\bm x}{\bm x}=0$$or $$\varphi=0, \frac\pi3, \frac{2\pi}3, \pi, \frac{4\pi}3, \frac{5\pi}3 \Longrightarrow {\bm \chi}_{\text{eff}}^{oo-e}{\bm y}{\bm x}{\bm x}=0.$$ 

${\bm \chi}_1^{\textbf{eff}}=\max{\bm \chi}_{\text{eff}}^{oo-e}{\bm y}{\bm x}{\bm x}=|d_{11}|$ with its corresponding azimuth angle $\varphi^*$ and  phase angle $\theta^*$,$$\varphi^*=\frac{\pi}6 \mbox{ or }\frac{5\pi}6 \mbox{ or }\frac{3\pi}2, \theta^*=0\ (d_{22}<0), \pi\ (d_{22}>0);$$$$\varphi^*= \frac{\pi}2\mbox{ or } \frac{7\pi}6 \mbox{ or }\frac{11\pi}6, \theta^*=0\ (d_{22}>0), \pi\ (d_{22}<0).$$

 $$\theta=\frac\pi2, \frac{3\pi}2\Longrightarrow {\bm \chi}_{\text{eff}}^{ee-o}{\bm x}{\bm y}{\bm y}=0$$  or $$\varphi=\frac{\pi}6, \frac{\pi}2, \frac{3\pi}2, \frac{5\pi}6, \frac{7\pi}6, \frac{11\pi}6  \Longrightarrow {\bm \chi}_{\text{eff}}^{ee-o}{\bm x}{\bm y}{\bm y}=0.$$

${\bm \chi}^{\text{eff}}_2=\max{\bm \chi}_{\text{eff}}^{ee-o}{\bm x}{\bm y}{\bm y}= |d_{22}|$ with its corresponding azimuth angle $\varphi^*$ and  phase angle $\theta^*$,   $$d_{22}>0\Longrightarrow\varphi^*=0\mbox{ or }\frac{2\pi}3  \mbox{ or }\frac{4\pi}3, \theta^*=0\mbox{ or }\pi;$$$$d_{22}<0\Longrightarrow\varphi^*=\frac\pi3\mbox{ or }\pi  \mbox{ or }\frac{5\pi}3, \theta^*=0\mbox{ or }\pi.$$

%\section{Conclusion} In this paper, the calculation formulae of phase angle are given for two classes of largest effective SHG coefficients in uniaxial crystals by means of  the optimization theory.  With the help of such calculation formulae, we present the best phase angles and azimuth angles  of all  uniaxial crystals class,   as well as their effective SHG coefficients. Furthermore, these calculation  is only dependent upon some principal subtensor of second order susceptibility tensor. 
\section*{Declaration of competing interest}
The author declare that they have no known competing financial interests or personal rela-tionships that could have appeared to influence the work reported in this paper.
\section*{Authors' contributions}Yisheng Song is the sole author.
\section*{Availability of data and materials}
This manuscript has no associated data or the data will not be deposited. 
\section*{Funding}
This  work was supported by
the National Natural Science Foundation of P.R. China (Grant No.12171064).
\section{Appendix A}

{\bf Proof of Lemma \ref{lem:21}.}  It is well-known that $$\max\{a_{11}x_1^2+2a_{12}x_1x_2+a_{22}x_2^2: x_1^2+x_2^2=1\}=\max\{\lambda: {\bm A}{\bm x}=\lambda{\bm x}\},$$
where ${\bm A}=\begin{pmatrix}
	a_{11}   \  a_{12}\\
	a_{12} \  a_{22}
\end{pmatrix}$ is the  coefficient matrix of the above quadratic form. That is, such a maximum  is  the largest eigenvalue of $ {\bm A}$.  Then  the  roots of characteristic polynomial of $ {\bm A}$ may be solved as follows,
$$(a_{11}-\lambda ) (a_{22}-\lambda ) -a^2_{12}=\lambda^2-(a_{11}+a_{22})\lambda+a_{11}a_{22}-a^2_{12}=0,$$
and so, $$\lambda=\dfrac{a_{11}+a_{22}\pm\sqrt{(a_{11}-a_{22})^2+4a^2_{12}}}2.$$
Thus, the desired maximum is $\lambda_1=\frac12(a_{11}+a_{22}+\sqrt{(a_{11}-a_{22})^2+4a^2_{12}}), $ and the corresponding  eigenvectors may be the roots of the following equations systems $$\begin{cases}
	(a_{11}-\lambda_1)x_1+a_{12}x_2=0,\\
	(a_{22}-\lambda_1)x_2+a_{12}x_1=0,
\end{cases}$$
and then whenever $a_{12}=0$,  we have ${\bm x}=\begin{cases}
	(1,0)^\top, &a_{11}>a_{22},\\
	(x_1,x_2)^\top, &a_{11}=a_{22},\\
	(0, 1)^\top, &a_{11}<a_{22}.
\end{cases}$  

If $a_{12}\ne0$,  then $${\bm x}=\begin{pmatrix}x_1\\x_2\end{pmatrix}=\begin{pmatrix}a_{12}\\\lambda_1-a_{11}\end{pmatrix}\mbox{ or }\begin{pmatrix}\lambda_1-a_{22}\\a_{12}\end{pmatrix}.$$
Therefore,  the maximizer is  $\dfrac{\pm{\bm x}}{\|{\bm x}\|_2}$.  Noting $(a_{11}-\lambda_1 ) (a_{22}-\lambda_1 )=a^2_{12}, $ we have
$$ \|{\bm x}\|_2^2=a_{12}^2+(\lambda_1-a_{11})^2=(\lambda_1-a_{11})(2\lambda_1-a_{11}-a_{22})$$ or 
$$ \|{\bm x}\|_2^2=a_{12}^2+(\lambda_1-a_{22})^2=(\lambda_1-a_{22})(2\lambda_1-a_{11}-a_{22}),$$
and $$a_{12}=\begin{cases}
	\sqrt{(\lambda_1 -a_{11}) (\lambda_1-a_{22})}, & a_{12}>0,\\
	-\sqrt{(\lambda_1 -a_{11}) (\lambda_1-a_{22})}, & a_{12}<0.
\end{cases}$$
By this time,  it is easy to yield $$\lambda_1 -a_{11}=\dfrac12\left(a_{22}-a_{11}+\sqrt{(a_{11}-a_{22})^2+4a^2_{12}}\right)\ge0$$ and $\lambda_1 -a_{22}=\dfrac12\left(a_{11}-a_{22}+\sqrt{(a_{11}-a_{22})^2+4a^2_{12}}\right)\ge0$.
So, we obtain $$\dfrac{{\bm x}}{\|{\bm x}\|_2}=\begin{cases}
	\dfrac{\left(\sqrt{\lambda_1 -a_{22}}, \sqrt{\lambda_1-a_{11}}\right)^\top}{\sqrt{2\lambda_1 -a_{11}-a_{22}}}, & a_{12}\ge0,\\
	(y_1,y_2)^\top \mbox{ with }y_1^2+y_2^2=1, & a_{12}=0\mbox{ and }a_{11}=a_{22},\\
	\dfrac{\left(-\sqrt{\lambda_1 -a_{22}}, \sqrt{\lambda_1-a_{11}}\right)^\top}{\sqrt{2\lambda_1 -a_{11}-a_{22}}}, & a_{12}\le0.
\end{cases}$$
By the simple calculation, the desired conclusions follow.  \qed

\section{Appendix B}

\begin{lemma}[Trigonometric Form of Cardano's Formula]\label{lem:51}
	Let a cubic equation be the following form, $t^3+pt+q=0.$   Assume  $$\Delta=\left(\frac{q}{2}\right)^2+\left(\frac{p}3\right)^3, \cos3\theta=\frac{-\frac{q}{2}}{\sqrt{\left(-\frac{p}3\right)^3}}=\frac{3\sqrt3q}{2p\sqrt{-p}}.$$  If the discriminant  $\bigtriangleup<0$,  then solve such a cubic equation to yield $$\begin{cases}
		t_1=2\sqrt{-\dfrac{p}3}\cos\theta,\\
		t_2=2\sqrt{-\dfrac{p}3}\cos\left(\theta+\dfrac{2\pi}3\right),\\
		t_3=2\sqrt{-\dfrac{p}3}\cos\left(\theta+\dfrac{4\pi}3\right).
	\end{cases}$$
\end{lemma}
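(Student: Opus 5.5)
The plan is a direct verification: substitute a cosine ansatz, use the triple-angle identity, and finish with a root-counting argument.

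\textbf{Step 1: make $\theta$ well defined.} Since $\left(\frac q2\right)^2\ge 0$, the hypothesis $\Delta<0$ forces $\left(\frac p3\right)^3<0$, so $p<0$. Set $r=\sqrt{-p/3}>0$, so that $\sqrt{(-p/3)^3}=r^3$ and $p=-3r^2$. Then $\Delta<0$ is equivalent to $\frac{q^2}{4}<r^6$, i.e.
$$\left|\frac{-q/2}{r^3}\right|<1 .$$
Hence there is a unique $\theta\in(0,\pi/3)$ with $\cos 3\theta=-\frac{q}{2r^3}$, taking $3\theta=\arccos(\cdot)\in(0,\pi)$. A short algebraic check confirms this agrees with the second expression $\frac{3\sqrt3 q}{2p\sqrt{-p}}$ in the statement.

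\textbf{Step 2: verify the three values are roots.} Substitute $t=2r\cos\phi$ into the cubic and use $p=-3r^2$:
$$t^3+pt+q=8r^3\cos^3\phi-6r^3\cos\phi+q=2r^3\left(4\cos^3\phi-3\cos\phi\right)+q=2r^3\cos3\phi+q .$$
This vanishes exactly when $\cos 3\phi=-\frac{q}{2r^3}=\cos3\theta$. The choices $\phi=\theta,\ \theta+\frac{2\pi}{3},\ \theta+\frac{4\pi}{3}$ all satisfy $3\phi\equiv 3\theta \pmod{2\pi}$. Therefore $t_1,t_2,t_3$ as stated are roots.

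\textbf{Step 3: show these are all the roots.} A cubic has at most three roots, so it suffices to show $t_1,t_2,t_3$ are pairwise distinct. This is the only step needing care. For angles in $[0,2\pi)$, two cosines are equal iff the angles coincide or their sum is $2\pi$. With $\theta\in(0,\pi/3)$, the three angles $\theta,\ \theta+\frac{2\pi}{3},\ \theta+\frac{4\pi}{3}$ are distinct and lie in $(0,2\pi)$. Their pairwise sums are:
\begin{itemize}
\item $\theta+\left(\theta+\frac{2\pi}3\right)=2\theta+\frac{2\pi}3$, which equals $2\pi$ only if $\theta=\frac{2\pi}3$;
\item $\theta+\left(\theta+\frac{4\pi}3\right)=2\theta+\frac{4\pi}3$, which equals $2\pi$ only if $\theta=\frac\pi3$;
\item $\left(\theta+\frac{2\pi}3\right)+\left(\theta+\frac{4\pi}3\right)=2\theta+2\pi$, which equals $2\pi$ only if $\theta=0$.
\end{itemize}
All three are excluded by $\theta\in(0,\pi/3)$. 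So the three cosines are distinct, and since $r>0$, so are $t_1,t_2,t_3$. They are therefore exactly the three roots of $t^3+pt+q=0$.

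(This is also consistent with the classical fact that $\Delta<0$ means three distinct real roots, but the direct argument above does not rely on it.)
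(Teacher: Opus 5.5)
Your proof is correct. The paper gives no proof of this lemma: it quotes it in Appendix B as the classical trigonometric form of Cardano's formula, so there is no argument of the paper's to compare against. Your route is the standard verification:
\begin{itemize}
\item deduce $p<0$ and $|q/(2r^3)|<1$ from $\Delta<0$;
\item substitute $t=2r\cos\phi$ and use $4\cos^3\phi-3\cos\phi=\cos3\phi$;
\item count distinct roots against the degree.
\end{itemize}

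One small remark. The statement lets $\theta$ be any angle with the prescribed value of $\cos3\theta$, whereas you fix $\theta\in(0,\pi/3)$. This costs nothing, because any other solution has the form $\pm\theta+\tfrac{2k\pi}{3}$ and only permutes $t_1,t_2,t_3$. Alternatively, distinctness holds for every admissible $\theta$ directly. Indeed, $\cos(\theta+\tfrac{2j\pi}{3})=\cos(\theta+\tfrac{2k\pi}{3})$ with $j\neq k$ would force $2\theta\in\tfrac{2\pi}{3}\mathbb{Z}$, hence $3\theta\in\pi\mathbb{Z}$ and $\cos3\theta=\pm1$. That contradicts $\Delta<0$.
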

\begin{example}[Point Group $\bar{4}$ (e.g., HgGa$_2$S$_4$)] \label{ex:1}  In Point Group  $\bar{4}$ Crystal, there are four non-zero independent components of second order susceptibility tensor ${\bm\chi}^{(2)}=(t_{ijk})$, $t_{311}=-t_{322}=d_{31}, t_{113}=t_{131}=-t_{223}=-t_{232}=d_{15}, $
	$t_{321}=t_{312}=d_{36}, t_{123}=t_{132}=t_{213}=t_{231}=d_{14} . $
\end{example}

Obviously, by Eqs. \eqref{eq:34} and \eqref{eq:38}, we have $$a_{11}({\bm x})=a_{22}({\bm x})=0, a_{12}({\bm x})=2d_{15}x_1x_2+d_{14}(x_2^2-x_1^2),$$$$V_1({\bm x})=V_2({\bm x})=0, V_3({\bm x})=2d_{36}x_1x_2+d_{31}(x_1^2-x_2^2).$$
Then  from Eqs. \eqref{eq:37} and \eqref{eq:311},  it follows that  $$ \theta=0, \pi \Longrightarrow {\bm \chi}_{\text{eff}}^{oo-e}{\bm y}{\bm x}{\bm x}=0;  \theta=0, \pi,\frac\pi2, \frac{3\pi}2 \Longrightarrow {\bm \chi}_{\text{eff}}^{ee-o}{\bm x}{\bm y}{\bm y}=0.$$ It is easy to show that $$ V_3({\bm x})=0, \mbox{i.e.,} \tan2\varphi=-\frac{d_{31}}{d_{36}}  \Longrightarrow {\bm \chi}_{\text{eff}}^{oo-e}{\bm y}{\bm x}{\bm x}=0;$$ $$ a_{12}({\bm x})=0, \mbox{i.e.,} \tan2\varphi=\frac{d_{14}}{d_{15}} \Longrightarrow {\bm \chi}_{\text{eff}}^{ee-o}{\bm x}{\bm y}{\bm y}=0.$$

It follows from Eqs. \eqref{eq:35} and \eqref{eq:36} that
$$\aligned {\bm \chi}_1^{\textbf{eff}}=&\max\limits_{\|{\bm x}\|_2=1}\sqrt{(V_3({\bm x}))^2} =\max\limits_{\varphi}|-d_{36}\sin2\varphi-d_{31}\cos2\varphi|\\=&\sqrt{d^2_{36}+d_{31}^2}\endaligned$$ with its maximizers ${\bm x}^*=(\sin\varphi^*, -\cos\varphi^*,0)$ with $ \sin2\varphi^*=\pm\frac{d_{36}}{\sqrt{d^2_{36}+d_{31}^2}}, \cos2\varphi^*=\pm\frac{d_{31}}{\sqrt{d^2_{36}+d_{31}^2}},$ ${\bm z}^*= (\cos\theta^*,\sin\theta^*)=\mp	(0,1),$
and  hence, its corresponding azimuth angle $\varphi^*$ and  phase angle $\theta^*$,  $$\varphi^*=\frac12\arcsin \frac{d_{36}}{\sqrt{d^2_{36}+d_{31}^2}}, \theta^*=\frac{3\pi}2\mbox{ or }\varphi^*=\frac{\pi}2+\frac12\arcsin \frac{d_{36}}{\sqrt{d^2_{36}+d_{31}^2}},  \theta^*=\frac{\pi}2 .$$ 

It follows from Eqs. \eqref{eq:39} and \eqref{eq:310} that
$${\bm \chi}_2^{\textbf{eff}}=\max\limits_{\|{\bm x}\|_2=1}\frac12\left(0+\sqrt{4(a_{12}({\bm x}))^2}\right)=\max\limits_{\varphi}|d_{14}\cos2\varphi-d_{15}\sin2\varphi|=\sqrt{d^2_{15}+d_{14}^2}$$ with its maximizer $${\bm x}^*=(\sin\varphi^*, -\cos\varphi^*,0)\mbox{ with } \sin2\varphi^*=\pm\frac{d_{15}}{\sqrt{d^2_{14}+d_{15}^2}}, \cos2\varphi^*=\mp\frac{d_{14}}{\sqrt{d^2_{14}+d_{15}^2}},$$ $${\bm z}^*=(\cos\theta^*, \sin\theta^*)=\begin{cases}
\pm\left(\frac{\sqrt2}2,\frac{\sqrt2}2\right),\ \  a_{12}({\bm x}^*)>0, \\
\pm\left(\frac{\sqrt2}2,-\frac{\sqrt2}2\right), a_{12}({\bm x}^*)<0.
\end{cases}
$$ Then, its corresponding azimuth angle $\varphi^*$ and  phase angle $\theta^*$, $$\varphi^*=\frac\pi2-\frac12\arcsin \frac{d_{15}}{\sqrt{d^2_{14}+d_{15}^2}}, \theta^*= \frac{3\pi}4\mbox{ or }\frac{7\pi}4;$$$$\varphi^*=\pi-\frac12\arcsin \frac{d_{15}}{\sqrt{d^2_{14}+d_{15}^2}}, \theta^*= \frac{\pi}4\mbox{ or }\frac{5\pi}4.$$

\begin{example}[Point Group $\bar{4}2$m (e.g., KH$_2$PO$_4$)] \label{ex:2}   In Point Group  $\bar{4}2$m Crystal, there are two non-zero independent components of second order susceptibility tensor ${\bm\chi}^{(2)}=(t_{ijk})$,  
	$t_{321}=t_{312}=d_{36}, t_{123}=t_{132}=t_{213}=t_{231}=d_{14} . $
\end{example}

Obviously, by Eqs. \eqref{eq:34} and \eqref{eq:38}, we have $$a_{11}({\bm x})=a_{22}({\bm x})=0, a_{12}({\bm x})=d_{14}(x_2^2-x_1^2),$$    $$V_1({\bm x})=V_2({\bm x})=0, V_3({\bm x})=2d_{36}x_1x_2.$$
Then from Eqs. \eqref{eq:37} and \eqref{eq:311},  it follows that  $$ \theta=0, \pi \Longrightarrow {\bm \chi}_{\text{eff}}^{oo-e}{\bm y}{\bm x}{\bm x}=0; \  \theta=0, \pi,\frac\pi2, \frac{3\pi}2\Longrightarrow {\bm \chi}_{\text{eff}}^{ee-o}{\bm x}{\bm y}{\bm y}=0. $$ It is easy to show that $$V_3({\bm x})=-d_{36}\sin2\varphi=0, \mbox{i.e., } \varphi=0, \frac\pi2, \frac{3\pi}2, \pi  \Longrightarrow {\bm \chi}_{\text{eff}}^{oo-e}{\bm y}{\bm x}{\bm x}=0, $$ $$ a_{12}({\bm x})=d_{14}\cos2\varphi=0, \mbox{i.e., } \varphi=\frac{\pi}4, \frac{3\pi}4, \frac{5\pi}4, \frac{7\pi}4 \Longrightarrow {\bm \chi}_{\text{eff}}^{ee-o}{\bm x}{\bm y}{\bm y}=0.$$

It follows from Eqs. \eqref{eq:35} and \eqref{eq:36} that $${\bm \chi}_1^{\textbf{eff}}=\max\limits_{\|{\bm x}\|_2=1}\sqrt{0+(V_3({\bm x}))^2}\\=\max\limits_{\|{\bm x}\|_2=1}|2d_{36}x_1x_2|=|d_{36}|$$ with its maximizer $${\bm x}^*=\pm\left(\frac{\sqrt2}2,\frac{\sqrt2}2,0\right),  {\bm z}^*=\begin{cases}
	(0,1), d_{36}>0,\\
	(0,-1), d_{36}<0.
\end{cases}$$ Hence, its corresponding azimuth angle $\varphi^*$ and  phase angle $\theta^*$, $$\varphi^*=\frac{3\pi}4\mbox{ or }\frac{7\pi}4,  \theta^*=\frac{\pi}2 (d_{36}>0), \frac{3\pi}2  (d_{36}<0).$$

It follows from Eqs. \eqref{eq:39} and \eqref{eq:310} that $${\bm \chi}_2^{\textbf{eff}}=\frac12\max\limits_{\|{\bm x}\|_2=1}(0+\sqrt{0+4(a_{12}({\bm x}))^2})=\max\limits_{\|{\bm x}\|_2=1}|d_{14}(x_2^2-x_1^2)|= |d_{14}|$$ with its maximizer ${\bm x}^*=(0,\pm1,0), \mbox{ or }
(\pm1,0,0);
$
$${\bm z}^*=\begin{cases}
	\pm\left(\frac{\sqrt2}2,\frac{\sqrt2}2\right), {\bm x}^*=(0,\pm1,0)\mbox{ and }d_{14}>0,\\
	\pm\left(\frac{\sqrt2}2,-\frac{\sqrt2}2\right), {\bm x}^*=(0,\pm1,0)\mbox{ and }d_{14}<0;\\
	\pm\left(\frac{\sqrt2}2,\frac{\sqrt2}2\right), {\bm x}^*=(\pm1,0,0)\mbox{ and }d_{14}<0,\\
	\pm\left(\frac{\sqrt2}2,-\frac{\sqrt2}2\right), {\bm x}^*=(\pm1,0,0)\mbox{ and }d_{14}>0;
\end{cases}$$  and  hence, its corresponding azimuth angle $\varphi^*$ and  phase angle $\theta^*$,      $$d_{14}>0 \Longrightarrow \begin{cases}\varphi^*=0\mbox{ or }\pi\mbox{ and } \theta^*=\frac{\pi}4\mbox{ or }\frac{5\pi}4; \\ \varphi^*=\frac{\pi}2\mbox{ or }\frac{3\pi}2\mbox{ and } \theta^*= \frac{3\pi}4\mbox{ or }\frac{7\pi}4.\end{cases}$$
$$d_{14}<0 \Longrightarrow \begin{cases}
	\varphi^*=\frac{\pi}2\mbox{ or }\frac{3\pi}2\mbox{ and } \theta^*=\frac{\pi}4\mbox{ or }\frac{5\pi}4;\\ \varphi^*=0\mbox{ or }\pi\mbox{ and } \theta^*= \frac{3\pi}4\mbox{ or }\frac{7\pi}4.
\end{cases}$$  

\begin{example}[Point Group 3m (e.g., LiNbO$_3$, LiTaO$_3$)]  \label{ex:3}  In Point Group 3mm Crystal, there are four non-zero independent components of second order susceptibility tensor ${\bm\chi}^{(2)}=(t_{ijk})$,  
	$t_{322}=t_{311}=d_{31}, t_{223}=t_{232}=t_{113}=t_{131}=d_{15}, t_{222}=-t_{211}=-t_{121}=-t_{112}=d_{22},  t_{333}=d_{33} .$
\end{example}

Obviously, by Eqs. \eqref{eq:34} and \eqref{eq:38}, we have $$a_{11}({\bm x})=d_{22}(3 x_1^2x_2-x_2^3), a_{22}({\bm x})=a_{12}({\bm x})=0,$$$$ V_1({\bm x})=-2d_{22}x_1x_2, V_2({\bm x})=d_{22}(x_2^2-x_1^2), V_3({\bm x})=d_{31}(x_1^2+x^2_2).$$
Then from Eqs. \eqref{eq:37} and \eqref{eq:311},  it follows that $$\tan \theta=-\frac{d_{22}\sin3\varphi}{d_{31}} \Longrightarrow {\bm \chi}_{\text{eff}}^{oo-e}{\bm y}{\bm x}{\bm x}=0;\ \theta=\frac\pi2, \frac{3\pi}2\Longrightarrow {\bm \chi}_{\text{eff}}^{ee-o}{\bm x}{\bm y}{\bm y}=0.$$ It is easy to show that $$ a_{11}({\bm x})=d_{22}\cos3\varphi=0, \mbox{i.e., } \varphi=\frac\pi6, \frac{\pi}2, \frac{3\pi}2,  \frac{5\pi}6, \frac{7\pi}6,\frac{11\pi}6  \Longrightarrow {\bm \chi}_{\text{eff}}^{ee-o}{\bm x}{\bm y}{\bm y}=0.$$

It follows from Eqs. \eqref{eq:35} and \eqref{eq:36} that   $${\bm \chi}_1^{\textbf{eff}}
=\max\limits_{\|{\bm x}\|_2=1}\sqrt{d^2_{22}(x_1^3-3x_1x_2^2)^2+d_{31}^2}=\sqrt{d^2_{22}+d_{31}^2}
$$ with its maximizer $${\bm x}^*=(\sin\varphi^*, -\cos\varphi^*,0)\mbox{ with }\sin^23\varphi^*=1;$$ $$ {\bm z}^*=(\cos\theta^*,\sin\theta^*)=\frac1{\sqrt{d^2_{22}+d_{31}^2}}(\pm d_{22}, d_{31}),$$ and  hence, its corresponding azimuth angle $\varphi^*$ and  phase angle $\theta^*$,  $$\varphi^*=\frac{\pi}6\mbox{ or }\frac{5\pi}6\mbox{ or }\frac{3\pi}2\mbox{ or }\frac{\pi}2\mbox{ or }\frac{7\pi}6\mbox{ or }\frac{11\pi}6;$$
$$ \theta^*=\arccos\frac{d_{22}}{\sqrt{d^2_{22}+d_{31}^2}}\mbox{ or }\pi-\arccos\frac{d_{22}}{\sqrt{d^2_{22}+d_{31}^2}}.$$

It follows from Eqs. \eqref{eq:39} and \eqref{eq:310} that $$\aligned {\bm \chi}_2^{\textbf{eff}}=&\frac12\max\limits_{\|{\bm x}\|_2=1}(a_{11}({\bm x})+|a_{11}({\bm x})|)=\max\limits_{\|{\bm x}\|_2=1}a_{11}({\bm x})=\max\limits_{\|{\bm x}\|_2=1}d_{22}(3 x_1^2x_2-x_2^3)\\=&\max\limits_{\varphi}d_{22}\cos3\varphi=|d_{22}|\endaligned$$ with its maximizer $${\bm x}^*=(\sin\varphi^* -\cos\varphi^*,0)\mbox{ with }\cos3\varphi^*=1 (d_{22}>0),
-1 (d_{22}<0),$$ $ {\bm z}^*=(\cos\theta^*,\sin\theta^*)=\pm(1,0),$ and  hence, its corresponding azimuth angle $\varphi^*$ and  phase angle $\theta^*$, $$d_{22}>0 \Longrightarrow \varphi^*=0\mbox{ or }\frac{2\pi}3\mbox{ or }\frac{4\pi}3\mbox{ and }\theta^*=0 \mbox{ or }\pi;$$
$$d_{22}<0 \Longrightarrow \varphi^*=\frac{\pi}3\mbox{ or }\pi\mbox{ or }\frac{5\pi}3\mbox{ and } \theta^*=0\mbox{ or }\pi.$$

\begin{example}[Point Group $\bar{6}$ (e.g., Ag$_2$HPO$_4$)]  \label{ex:4}  In Point Group $\bar{6}$ Crystal, there are two non-zero independent components of second order susceptibility tensor ${\bm\chi}^{(2)}=(t_{ijk})$, 
	$
	t_{111}=-t_{122}=-t_{212}=-t_{221}=d_{11}, 
	t_{222}=-t_{112}=-t_{121}=-t_{211}=d_{22}.
	$
\end{example}

Clearly, by Eqs. \eqref{eq:34} and \eqref{eq:38}, we have $$ 
V_1({\bm x})=d_{11}(x_{1}^{2}-x_{2}^{2})-2d_{22}x_{1}x_{2},  V_2({\bm x})=d_{22}(x_{2}^{2}-x_{1}^{2})-2d_{11}x_{1}x_{2}, V_3({\bm x})=0,
$$ 
$$   a_{11}({\bm x})=d_{11}(3x_{1}x_{2}^{2}-x_1^{3})+d_{22}(3x_{1}^{2}x_{2}-x_{2}^{3}), a_{12}({\bm x})=a_{22}({\bm x})=0.$$ 
Then from Eqs. \eqref{eq:37} and \eqref{eq:311},  it follows that $$ \theta=\frac\pi2,\frac{3\pi}2 \Longrightarrow {\bm \chi}_{\text{eff}}^{oo-e}{\bm y}{\bm x}{\bm x}=0; \theta=\frac\pi2, \frac{3\pi}2\Longrightarrow {\bm \chi}_{\text{eff}}^{ee-o}{\bm x}{\bm y}{\bm y}=0.$$ It is easy to show that $ x_2V_1({\bm x})-x_1V_2({\bm x})=d_{11}\cos3\varphi-d_{22}\sin3\varphi=0, $ i.e., $$\tan3\varphi=\frac{d_{11}}{d_{22}}  \Longrightarrow {\bm \chi}_{\text{eff}}^{oo-e}{\bm y}{\bm x}{\bm x}=0;$$  $ a_{11}({\bm x})=d_{11}\sin3\varphi+d_{22}\cos3\varphi=0, $ i.e., $$ \tan3\varphi=-\frac{d_{22}}{d_{11}} \Longrightarrow {\bm \chi}_{\text{eff}}^{ee-o}{\bm x}{\bm y}{\bm y}=0.$$

It follows from Eqs. \eqref{eq:35} and \eqref{eq:36} that  $$\aligned {\bm \chi}^{\text{eff}}_1=& \max\limits_{{\bm x}\|_2=1}\sqrt{(d_{11}(3x_{1}^{2}x_{2}-x_{2}^{3})-d_{22}(3x_{1}x_{2}^{2}-x_{1}^{3}))^{2}}\\
=& \max\limits_{{\bm x}\|_2=1}|d_{11}(3x_{1}^{2}x_{2}-x_{2}^{3})-d_{22}(3x_{1}x_{2}^{2}-x_{1}^{3})|
= \sqrt{d_{11}^{2}+d_{22}^{2}}\endaligned$$	
with its maximizer $ {\bm x}^*=(\sin\varphi^*, -\cos\varphi^*,0)$ with $$\cos3\varphi^*=\frac{\pm d_{11}}{\sqrt{d_{11}^{2}+d_{22}^{2}}},  \sin3\varphi=\frac{\mp d_{22}}{\sqrt{d_{11}^{2}+d_{22}^{2}}},\ {\bm z}^*=\pm
(1,0).$$ Let $\alpha=\frac{2\pi}3-\frac13\arcsin\frac{d_{22}}{\sqrt{d_{11}^{2}+d_{22}^{2}}}$ and $\alpha^*=\frac{\pi}3-\frac13\arcsin\frac{d_{22}}{\sqrt{d_{11}^{2}+d_{22}^{2}}}.$ Then its corresponding azimuth angle $\varphi^*$ and  phase angle $\theta^*$, 
$$\varphi^*=\alpha, \alpha+\frac{2\pi}3, \alpha+\frac{4\pi}3,\ \theta^*=0\mbox{
or }\varphi^*=\alpha^*, \alpha^*+\frac{2\pi}3, \alpha^*+\frac{4\pi}3,\ \theta^*=\pi.$$

It follows from Eqs. \eqref{eq:39} and \eqref{eq:310} that 
$$\aligned 
{\bm \chi}_{\text{eff}}^2=&\frac12\max\limits_{\|{\bm x}\|_2=1}\left\{a_{11}({\bm x})+|a_{11}({\bm x})|\right\}
=\max\limits_{\|{\bm x}\|_2=1}a_{11}({\bm x})\\ =&\max\limits_{\|{\bm x}\|_2=1}(d_{11}(3x_{1}x_{2}^{2}-x_1^{3})+d_{22}(3x_{1}^{2}x_{2}-x_{2}^{3}))=\sqrt{d_{11}^{2}+d_{22}^{2}}\endaligned$$ with its maximizer ${\bm x}^*=(\sin\varphi^*, -\cos\varphi^*,0)$ with $$ \cos3\varphi=\frac{d_{22}}{\sqrt{d_{11}^{2}+d_{22}^{2}}},  \sin3\varphi^*=\frac{d_{11}}{\sqrt{d_{11}^{2}+d_{22}^{2}}},\ {\bm z}^*=(\cos\theta^*,\sin\theta^*)=(1,0).$$
Let $\beta=\frac13\arcsin\frac{d_{11}}{\sqrt{d_{11}^{2}+d_{22}^{2}}}.$ Then its corresponding azimuth angle $\varphi^*$ and  phase angle $\theta^*$,  
$$\varphi^*=\beta, \beta+\frac{2\pi}3, \beta+\frac{4\pi}3,\ \theta^*=0.$$ 

\begin{example}[Point Groups 6 and 4 (e.g., LiIO$_3$, $\alpha$-LiIO$_3$)] \label{ex:5}   In Point Group 6 Crystal, there are four non-zero independent components of second order susceptibility tensor ${\bm\chi}^{(2)}=(t_{ijk})$, 
	$t_{322}=t_{311}=d_{31}, t_{123}=t_{132}=-t_{213}=-t_{231}=d_{14}, t_{223}=t_{232}=t_{113}=t_{131}=d_{15},  t_{333}=d_{33} .$
\end{example}

Obviously, by Eqs. \eqref{eq:34} and \eqref{eq:38}, we have $$a_{11}({\bm x})=a_{22}({\bm x})=0, a_{12}({\bm x})=-d_{14}(x_1^2+x_2^2),$$ $$V_1({\bm x})=V_2({\bm x})=0, V_3({\bm x})=d_{31}(x_1^2+x^2_2).$$
Then from Eqs. \eqref{eq:37} and \eqref{eq:311},  it follows that $$ \theta=0,\pi \Longrightarrow {\bm \chi}_{\text{eff}}^{oo-e}{\bm y}{\bm x}{\bm x}=0;$$   $$\theta=0,\pi,\frac\pi2, \frac{3\pi}2\Longrightarrow {\bm \chi}_{\text{eff}}^{ee-o}{\bm x}{\bm y}{\bm y}=0.$$ 

It follows from Eqs. \eqref{eq:35} and \eqref{eq:36} that  $$ {\bm \chi}_1^{\textbf{eff}}=\max\limits_{\|{\bm x}\|_2=1}\sqrt{0+(V_3({\bm x}))^2}
=\max\limits_{\|{\bm x}\|_2=1}|d_{31}(x_1^2+x^2_2)|=|d_{31}|
$$ with its maximizer $${\bm x}^*=(\sin\varphi^*, -\cos\varphi^*,0), \\ {\bm z}^*=
(0,1)\ (d_{31}>0),
(0,-1)\ (d_{31}<0);
$$ and  hence, its corresponding azimuth angle $\varphi^*$ and  phase angle $\theta^*$,  $$\varphi^*\in[0,2\pi],  \theta^*=
\frac{\pi}2\ (d_{31}>0), 
\frac{3\pi}2\ (d_{31}<0).
$$

It follows from Eqs. \eqref{eq:39} and \eqref{eq:310} that 
$${\bm \chi}_2^{\textbf{eff}}=\frac12\max\limits_{\|{\bm x}\|_2=1}(0+\sqrt{0+4(a_{12}({\bm x}))^2})=\max\limits_{\|{\bm x}\|_2=1}|a_{12}({\bm x})|=|d_{14}|$$ with its maximizer $${\bm x}^*=(\sin\varphi^*, -\cos\varphi^*,0),  {\bm z}^*=
\pm\left(\frac{\sqrt2}2,\frac{\sqrt2}2\right)\ (d_{14}<0),
\pm\left(\frac{\sqrt2}2,-\frac{\sqrt2}2\right)\ (d_{14}>0),
$$ and  hence, its corresponding azimuth angle $\varphi^*$ and  phase angle $\theta^*$,   $$d_{14}>0 \Longrightarrow \varphi^*\in[0,2\pi] \mbox{ and }\theta^*=\frac{\pi}4\mbox{ or }\frac{5\pi}4 ;$$ 
$$d_{14}<0 \Longrightarrow \varphi^*\in[0,2\pi] \mbox{ and } \theta^*= \frac{3\pi}4\mbox{ or }\frac{7\pi}4.$$ 

\begin{example}[Point Group 3 (e.g., Na$_2$I$_2$O$_3$-6H$_2$O)]\label{ex:6}    In Point Group 3 Crystal, there are six non-zero independent components of second order susceptibility tensor ${\bm\chi}^{(2)}=(t_{ijk})$, 
	$t_{111}=-t_{122}=-t_{212}=-t_{221}=d_{11}, t_{123}=t_{132}=-t_{213}=-t_{231}=d_{14},  t_{222}=-t_{211}=-t_{112}=-t_{121}=d_{22}, t_{223}=t_{232}=t_{113}=t_{131}=d_{15}, t_{322}=t_{311}=d_{31},  t_{333}=d_{33} .$
\end{example}

Obviously, by Eqs. \eqref{eq:34} and \eqref{eq:38}, we have $$a_{11}({\bm x})=d_{11}(3x_1x_2^2-x_1^3)+d_{22}(3x_1^2x_2-x_2^3), a_{22}({\bm x})=0,  a_{12}({\bm x})=-d_{14}(x_1^2+x_2^2),$$  $$V_1({\bm x})=d_{11}(x_1^2-x_2^2)-2d_{22}x_1x_2, V_2({\bm x})=d_{22}(x_2^2-x_1^2)-2d_{11}x_1x_2,  V_3({\bm x})=d_{31}(x_1^2+x^2_2).$$
Then from Eqs. \eqref{eq:37} and \eqref{eq:311},  it follows that $$\tan\theta=\frac{d_{22}\sin3\varphi-d_{11}\cos3\varphi}{d_{31}}\Longrightarrow {\bm \chi}_{\text{eff}}^{oo-e}{\bm y}{\bm x}{\bm x}=0;$$ $$\tan\theta=-\frac{d_{11}\sin3\varphi+d_{22}\cos3\varphi}{2d_{14}}\Longrightarrow {\bm \chi}_{\text{eff}}^{ee-o}{\bm x}{\bm y}{\bm y}=0.$$ 

It follows from Eqs. \eqref{eq:35} and \eqref{eq:36} that  $$\aligned {\bm \chi}_1^{\textbf{eff}}=&\max\limits_{\|{\bm x}\|_2=1}\sqrt{(d_{11}(x_2^3-3x_1^2x_2)+d_{22}(3x_1x_2^2-x_1^3))^2+(d_{31}(x_1^2+x^2_2))^2}
\\=& \max\limits_{\varphi}\sqrt{(d_{22}\sin3\varphi-d_{11}\cos3\varphi)^2+d_{31}^2}=\sqrt{d_{11}^2+d_{22}^2+d_{31}^2}\endaligned$$ with its maximizer $${\bm x}^*=(\sin\varphi^*, -\cos\varphi^*,0)\mbox{ with } \sin3\varphi^*=\pm\frac{d_{22}}{\sqrt{d_{11}^2+d_{22}^2}}, \cos3\varphi^*=\mp\frac{d_{11}}{\sqrt{d_{11}^2+d_{22}^2}},$$ $$ {\bm z}^*=(\cos\theta^*,\sin\theta^*)=\left(\pm\frac{\sqrt{d_{11}^2+d_{22}^2}}{\sqrt{d_{11}^2+d_{22}^2+d_{31}^2}}, \frac{d_{
		31}}{\sqrt{d_{11}^2+d_{22}^2+d_{31}^2}}\right)$$ and  hence, its corresponding azimuth angle   $\varphi^*$ and  phase angle $\theta^*$, $$\varphi^*=\frac\pi3-\beta, \pi-\beta,\frac{5\pi}3-\beta, \beta=\frac13\arcsin \frac{d_{22}}{\sqrt{d_{11}^2+d_{22}^2}},  \theta^*=\arcsin \frac{d_{
		31}}{\sqrt{d_{11}^2+d_{22}^2+d_{31}^2}},$$   $$\varphi^*=\frac{2\pi}3-\beta, \frac{4\pi}3-\beta,2\pi-\beta, \theta^*=\pi-\arcsin \frac{d_{
		31}}{\sqrt{d_{11}^2+d_{22}^2+d_{31}^2}}.$$

It follows from Eqs. \eqref{eq:39} and \eqref{eq:310} that $$\aligned {\bm \chi}_2^{\textbf{eff}}=&\frac12\max\limits_{\varphi}\left(d_{11}\sin3\varphi+d_{22}\cos3\varphi+\sqrt{(d_{11}\sin3\varphi+d_{22}\cos3\varphi)^2+4d_{14}^2}\right)\\
=&\frac12\left(\sqrt{d_{11}^2+d_{22}^2}+\sqrt{d_{11}^2+d_{22}^2+4d_{14}^2}\right)\endaligned$$ with its maximizer ${\bm x}^*=(\sin\varphi^*, -\cos\varphi^*,0)$ with $ \sin3\varphi^*=\frac{d_{11}}{\sqrt{d_{11}^2+d_{22}^2}},$ $${\bm z}^*=(\cos\theta^*,\sin\theta^*)=\begin{cases}
	\pm\left(\sqrt{\frac{1+\alpha}2},\sqrt{\frac{1-\alpha}2}\right), d_{14}<0,\\
	\pm\left(\sqrt{\frac{1+\alpha}2},-\sqrt{\frac{1-\alpha}2}\right), d_{14}>0,
\end{cases}$$ where $\alpha=\frac{\sqrt{d_{11}^2+d_{22}^2}}{\sqrt{d_{11}^2+d_{22}^2+4d_{14}^2}},$ and  hence, its corresponding azimuth angle   $\varphi^*$ and  phase angle $\theta^*$,$$\varphi^*=\phi\mbox{ or }\phi+\frac{2\pi}3\mbox{ or } \phi+\frac{4\pi}3, \phi=\frac13\arcsin\frac{d_{11}}{\sqrt{d_{11}^2+d_{22}^2}},$$
$$d_{14}>0 \Longrightarrow \theta^*=\pi-\arccos\sqrt{\frac{1+\alpha}2}\mbox{ or }2\pi-\arccos\sqrt{\frac{1+\alpha}2};$$ 
$$d_{14}<0 \Longrightarrow  \theta^*= \arccos\sqrt{\frac{1+\alpha}2}\mbox{ or }\pi+\arccos\sqrt{\frac{1+\alpha}2}.$$

\begin{example}[Point Groups 4mm and 6mm (e.g., BaTiO$_3$]\label{ex:7}   In Point Groups 4mm and 6mm Crystal, there are three non-zero independent components of second order susceptibility tensor ${\bm\chi}^{(2)}=(t_{ijk})$, \ $t_{322}=t_{311}=d_{31}, t_{223}=t_{232}=t_{113}=t_{131}=d_{15}, t_{333}=d_{33} .$
\end{example}

Obviously, by Eqs. \eqref{eq:34} and \eqref{eq:38}, we have $$a_{11}({\bm x})=a_{22}({\bm x})=a_{12}({\bm x})=0, V_1({\bm x})= V_2({\bm x})=0, V_3({\bm x})=d_{31}(x_1^2+x^2_2).$$
Then from Eqs. \eqref{eq:37},  it follows that $$ \theta=0,\pi \Longrightarrow {\bm \chi}_{\text{eff}}^{oo-e}{\bm y}{\bm x}{\bm x}=0.$$ 

It follows from Eqs. \eqref{eq:35} and \eqref{eq:36} that $${\bm \chi}_1^{\textbf{eff}}=\max\limits_{\|{\bm x}\|_2=1}\sqrt{0+(d_{31}(x_1^2+x^2_2))^2}
=\max_{\|{\bm x}\|_2=1}|d_{31}(x_1^2+x^2_2)|=|d_{31}|
$$ with its maximizer $${\bm x}^*=(\sin\varphi^*, -\cos\varphi^*,0), \varphi\in[0,2\pi],  {\bm z}^*=(\cos\theta^*,\sin\theta^*)=\begin{cases}
	(0,1), d_{31}>0,\\
	(0,-1), d_{31}<0,
\end{cases}$$ and  hence, its corresponding azimuth angle $\varphi^*$ and  phase angle $\theta^*$, $$\varphi^*\in[0,2\pi],\mbox{ and } \theta^*=\frac\pi2 \ \ (d_{31}>0),\frac{3\pi}2  \ \  (d_{31}<0).$$  

\begin{example}[Point Group 32 (e.g., $\alpha$-SiO$_2$, HgO)] \label{ex:8}   In Point Group 32 Crystal, there are five non-zero independent components of second order susceptibility tensor ${\bm\chi}^{(2)}=(t_{ijk})$, $$t_{111}=-t_{122}=-t_{221}=-t_{212}=d_{11}, t_{123}=t_{132}=-t_{213}=-t_{231}=d_{14}. $$
\end{example}

Obviously, by Eqs. \eqref{eq:34} and \eqref{eq:38}, we have $$a_{11}({\bm x})=d_{11}(3x_1x_2^2-x_1^3), a_{22}({\bm x})=0, a_{12}({\bm x})=-d_{14}(x_1^2+x^2_2),$$ $$V_1({\bm x})=d_{11}(x_1^2-x_2^2),  V_2({\bm x})=-2d_{11}x_1x_2, V_3({\bm x})=0.$$
Then from Eqs. \eqref{eq:37} and \eqref{eq:311},  it follows that  $$ \theta=\frac\pi2, \frac{3\pi}2\Longrightarrow {\bm \chi}_{\text{eff}}^{oo-e}{\bm y}{\bm x}{\bm x}=0,$$ $$\tan\theta=\frac{d_{11}\sin3\varphi}{2d_{14}}\Longrightarrow {\bm \chi}_{\text{eff}}^{ee-o}{\bm x}{\bm y}{\bm y}=0.$$ It is easy to show that $$ x_2V_1({\bm x})-x_1V_2({\bm x})=d_{11}\cos3\varphi=0, \mbox{i.e., } \varphi= \frac\pi6, \frac{\pi}2, \frac{3\pi}2,  \frac{5\pi}6, \frac{7\pi}6,\frac{11\pi}6 \Longrightarrow {\bm \chi}_{\text{eff}}^{oo-e}{\bm y}{\bm x}{\bm x}=0.$$ 

It follows from Eqs. \eqref{eq:35} and \eqref{eq:36} that  $${\bm \chi}_1^{\textbf{eff}}=\max\limits_{\|{\bm x}\|_2=1}\sqrt{(d_{11}(3x_1^2x_2-x^3_2))^2}
=\max\limits_{\|{\bm x}\|_2=1}|d_{11}(3x_1^2x_2-x^3_2)|=|d_{11}|
$$ with its maximizer ${\bm x}^*=(\sin\varphi^*, -\cos\varphi^*,0)$ with $$ \cos3\varphi^*=\pm1,\  {\bm z}^*=(\cos\theta^*,\sin\theta^*)=\begin{cases}
	(1,0),\ \ \cos3\varphi=1, d_{11}>0,\\
	(-1,0), \cos3\varphi=1, d_{11}<0,\\
	(1,0),\ \  \cos3\varphi=-1, d_{11}<0,\\
	(-1,0), \cos3\varphi=-1, d_{11}>0,
\end{cases}$$ and  hence,  its corresponding azimuth angle $\varphi^*$ and  phase angle $\theta^*$,   $$d_{11}>0 \Longrightarrow \begin{cases}
	\varphi^*=0\mbox{ or }\frac{2\pi}3\mbox{ or }\frac{4\pi}3, \theta^*=0;  \\ \varphi^*=\frac\pi3\mbox{ or }\pi\mbox{ or }\frac{5\pi}3,\theta^*=\pi;
\end{cases}$$  $$d_{11}<0 \Longrightarrow \begin{cases}
	\varphi^*=\pi\mbox{ or }\frac{2\pi}3\mbox{ or }\frac{4\pi}3, \theta^*=\pi; \\ \varphi^*=\frac\pi3\mbox{ or }\pi\mbox{ or }\frac{5\pi}3, \theta^*=0.
\end{cases}$$

It follows from Eqs. \eqref{eq:39} and \eqref{eq:310} that $$ \aligned {\bm \chi}_2^{\textbf{eff}}=&\frac12\max\limits_{\|{\bm x}\|_2=1}(d_{11}(3x_1x_2^2-x_1^3)+\sqrt{(d_{11}(3x_1x_2^2-x_1^3))^2+4(-d_{14}(x_1^2+x^2_2))^2})\\=&\frac12\left(|d_{11}|+\sqrt{d^2_{11}+4d^2_{14}}\right)\endaligned$$ with its maximizer ${\bm x}^*=(\sin\varphi^*, -\cos\varphi^*,0)$ with $$ \sin3\varphi^*=\begin{cases}
	1, d_{11}>0,\\
	-1, d_{11}<0,
\end{cases} {\bm z}^*=(\cos\theta^*,\sin\theta^*)=\begin{cases}
	\pm\left(\sqrt{\frac{1+\alpha}2},\sqrt{\frac{1-\alpha}2}\right), d_{14}<0,\\
	\pm\left(\sqrt{\frac{1+\alpha}2},-\sqrt{\frac{1-\alpha}2}\right), d_{14}>0,
\end{cases}$$ where $\alpha=\frac{|d_{11}|}{\sqrt{d_{11}^2+4d_{14}^2}},$ and  hence, its corresponding azimuth angle $\varphi^*$ and  phase angle $\theta^*$,  $$\varphi^*=\frac{\pi}6 \mbox{ or }\frac{5\pi}6\mbox{ or } \frac{3\pi}2(d_{11}>0);   \varphi^*=\frac{\pi}2 \mbox{ or }\frac{7\pi}6 \mbox{ or } \frac{11\pi}6\ (d_{11}<0),$$
$$d_{14}>0 \Longrightarrow \theta^*=\pi-\arccos\sqrt{\frac{1+\alpha}2}\mbox{ or }2\pi-\arccos\sqrt{\frac{1+\alpha}2};$$
$$d_{14}<0 \Longrightarrow  \theta^*= \arccos\sqrt{\frac{1+\alpha}2}\mbox{ or }\pi+\arccos\sqrt{\frac{1+\alpha}2}.$$

\begin{example}[Point Groups 422 and 622  (e.g.,  SiO$_2$)] \label{ex:9}   In Point Groups 422 and 622 Crystal, there is one non-zero independent components of second order susceptibility tensor ${\bm\chi}^{(2)}=(t_{ijk})$, $t_{123}=t_{132}=-t_{231}=-t_{213}=d_{14}. $
\end{example}

Obviously, by Eqs. \eqref{eq:34} and \eqref{eq:38}, we have $$a_{11}({\bm x})= a_{22}({\bm x})=0, a_{12}({\bm x})=-d_{14}(x_1^2+x^2_2),  V_1({\bm x})=V_2({\bm x})=V_3({\bm x})=0.$$
Then from Eq.  \eqref{eq:311},  it follows that  $${\bm \chi}_{\text{eff}}^{oo-e}{\bm y}{\bm x}{\bm x}=0, \theta=0,\pi,\frac\pi2, \frac{3\pi}2\Longrightarrow {\bm \chi}_{\text{eff}}^{ee-o}{\bm x}{\bm y}{\bm y}=0.$$ 

It follows from Eqs. \eqref{eq:39} and \eqref{eq:310} that $$d_2^{\textbf{eff}}=\frac12\max\limits_{\|{\bm x}\|_2=1}(0+\sqrt{4(-d_{14}(x_1^2+x^2_2))^2})=|d_{14}|$$ with its maximizer ${\bm x}^*=(\sin\varphi^*, -\cos\varphi^*,0),  \varphi\in[0,2\pi],$ $${\bm z}^*=(\cos\theta^*,\sin\theta^*)=\begin{cases}
	\pm\left(\sqrt{\frac12},\sqrt{\frac12}\right), d_{14}<0,\\
	\pm\left(\sqrt{\frac12},-\sqrt{\frac12}\right), d_{14}>0,
\end{cases}$$ and  hence, its corresponding azimuth angle $\varphi$ and  phase angle $\theta^*$,  $$\varphi^*\in[0,2\pi], \theta^*=\frac{\pi}4 \mbox{ or }\frac{5\pi}4\  (d_{11}<0);  \frac{3\pi}4 \mbox{ or } \frac{7\pi}4 \ (d_{11}>0).$$

\begin{example}[Point Groups $\bar{6}$m2  (e.g., BaTiSi$_3$O$_9$)] \label{ex:10}   In Point Group $\bar{6}$m2 Crystal, there is one non-zero independent components of second order susceptibility tensor ${\bm\chi}^{(2)}=(t_{ijk})$, $t_{222}=-t_{211}=-t_{121}=-t_{112}=d_{22}. $
\end{example}

Obviously, by Eqs. \eqref{eq:34} and \eqref{eq:38}, we have $$a_{11}({\bm x})=d_{22}(3x^2_1x_2-x_2^3), a_{22}({\bm x})=a_{12}({\bm x})=0, $$ $$V_1({\bm x})=-2d_{22}x_1x_2, V_2({\bm x})=d_{22}(x_2^2-x_1^2), V_3({\bm x})=0.$$
Then from Eqs. \eqref{eq:37} and \eqref{eq:311},  it follows that $$\theta=\frac\pi2, \frac{3\pi}2\Longrightarrow {\bm \chi}_{\text{eff}}^{oo-e}{\bm y}{\bm x}{\bm x}=0;$$ $$\theta=\frac\pi2, \frac{3\pi}2\Longrightarrow {\bm \chi}_{\text{eff}}^{ee-o}{\bm x}{\bm y}{\bm y}=0.$$  It is easy to show that $$ x_2V_1({\bm x})-x_1V_2({\bm x})=-d_{22}\sin3\varphi=0, \mbox{i.e., } \varphi=0, \frac\pi3, \frac{2\pi}3, \pi, \frac{4\pi}3, \frac{5\pi}3 \Longrightarrow {\bm \chi}_{\text{eff}}^{oo-e}{\bm y}{\bm x}{\bm x}=0, $$ $$ a_{11}({\bm x})=d_{22}\cos3\varphi=0, \mbox{i.e., } \varphi=\frac{\pi}6, \frac{\pi}2, \frac{3\pi}2, \frac{5\pi}6, \frac{7\pi}6, \frac{11\pi}6 \Longrightarrow {\bm \chi}_{\text{eff}}^{ee-o}{\bm x}{\bm y}{\bm y}=0.$$

It follows from Eqs. \eqref{eq:35} and \eqref{eq:36} that $${\bm \chi}_1^{\textbf{eff}}=\max\limits_{\|{\bm x}\|_2=1}\sqrt{d^2_{22}(x_1^3-3x_1x^2_2)^2}=\max\limits_{\|{\bm x}\|_2=1}|d_{22}(x_1^3-3x_1x^2_2)|=|d_{22}|$$
with its maximizer ${\bm x}^*=(\sin\varphi^*, -\cos\varphi^*,0) $ with $$ \sin3\varphi^*=\pm1, {\bm z}^*=(\cos\theta^*,\sin\theta^*)=\pm(1,0),$$  and  hence, its corresponding azimuth angle $\varphi^*$ and  phase angle $\theta^*$,   $$\varphi^*=\frac{\pi}6 \mbox{ or }\frac{5\pi}6 \mbox{ or }\frac{3\pi}2, \theta^*=0\ (d_{22}<0), \pi\ (d_{22}>0);$$ $$\varphi^*= \frac{\pi}2\mbox{ or } \frac{7\pi}6 \mbox{ or }\frac{11\pi}6, \theta^*=0\ (d_{22}>0), \pi\ (d_{22}<0).$$

It follows from Eqs. \eqref{eq:39} and \eqref{eq:310} that  $$\aligned{\bm \chi}_2^{\textbf{eff}}=&\frac12\max\limits_{\|{\bm x}\|_2=1}(a_{11}({\bm x})+\sqrt{(a_{11}({\bm x}))^2})=\max\limits_{\|{\bm x}\|_2=1}a_{11}({\bm x})\\=&\max\limits_{\|{\bm x}\|_2=1}d_{22}(3x^2_1x_2-x_2^3)=|d_{22}|\endaligned$$ with its maximizer ${\bm x}^*=(\sin\varphi^*, -\cos\varphi^*,0)$ with $$ \cos3\varphi^*=
1 \ (d_{22}>0),\ -1\ (d_{22}<0), {\bm z}^*=(\cos\theta^*,\sin\theta^*)=	\pm(1,0),$$ and  hence, its corresponding azimuth angle $\varphi^*$ and  phase angle $\theta^*$ , $$d_{22}>0\Longrightarrow\varphi^*=0\mbox{ or }\frac{2\pi}3  \mbox{ or }\frac{4\pi}3, \theta^*=0\mbox{ or }\pi;$$ $$d_{22}<0\Longrightarrow\varphi^*=\frac\pi3\mbox{ or }\pi  \mbox{ or }\frac{5\pi}3, \theta^*=0\mbox{ or }\pi.$$

\end{document}